\documentclass[journal,draftcls,onecolumn,12pt,twoside]{IEEEtran}
\usepackage{cite}
\usepackage{amsmath}
\usepackage{amssymb}
\usepackage{amsthm} 
\usepackage{epstopdf}
\usepackage{booktabs}
\usepackage[table,xcdraw]{xcolor}
\usepackage{graphicx}
\usepackage{float}
\usepackage{color}
\usepackage{latexsym}
\usepackage{algorithm}

\usepackage[nodisplayskipstretch]{setspace}
\usepackage{optidef}
\usepackage[noend]{algpseudocode}

\begin{document}

\setstretch{1.49}

\title{Optimizing Downlink Resource Allocation in Multiuser MIMO Networks via Fractional Programming and the Hungarian Algorithm}

\author{Ahmad~Ali~Khan,~\IEEEmembership{Student Member,~IEEE,}
        Raviraj~Adve,~\IEEEmembership{Fellow,~IEEE,}
        and~Wei~Yu,~\IEEEmembership{Fellow,~IEEE}
        
\thanks{The authors are with the Department
of Electrical and Computer Engineering, University of Toronto, Ontario,
ON M5S 3G4, Canada. E-mails: (akhan, rsadve, weiyu)@ece.utoronto.ca. The materials in this paper have been presented in part at IEEE Global Communications Conference (Globecom) $7^\text{th}$ International Workshop on "Emerging Technologies for 5G and Beyond Wireless and Mobile Networks (ET5GB)", Abu Dhabi, December 2018 \cite{khan2018optimizing}.}% <-this % stops a space
}

% The paper headers
\markboth{IEEE Transactions on Wireless Communications}%
{Submitted paper}
\maketitle

\vspace*{-4em}
%\begin{IEEEkeywords}
%\noindent Coordinated beamforming, scheduling, multiuser MIMO, fractional programming, Hungarian algorithm.
%\end{IEEEkeywords}

% For peer review papers, you can put extra information on the cover
% page as needed:
% \ifCLASSOPTIONpeerreview
% \begin{center} \bfseries EDICS Category: 3-BBND \end{center}
% \fi
%
% For peerreview papers, this IEEEtran command inserts a page break and
% creates the second title. It will be ignored for other modes.
\IEEEpeerreviewmaketitle

{
\begin{abstract}
        Optimizing the sum-log-utility for the downlink of multi-frequency band, multiuser, multiantenna networks requires joint solutions to the associated beamforming and user scheduling problems through the use of cloud radio access network (CRAN) architecture; optimizing such a network is, however, non-convex and NP-hard. In this paper, we present a novel iterative beamforming and scheduling strategy based on fractional programming and the Hungarian algorithm. The beamforming strategy allows us to iteratively maximize the chosen objective function in a fashion similar to block coordinate ascent. Furthermore, based on the crucial insight that, in the downlink, the interference pattern remains fixed for a given set of beamforming weights, we use the Hungarian algorithm as an efficient approach to optimally schedule users for the given set of beamforming weights. Specifically, this approach allows us to select the best subset of users (amongst the larger set of all available users). Our simulation results show that, in terms of average sum-log-utility, as well as sum-rate, the proposed scheme substantially outperforms both the state-of-the-art multicell weighted minimum mean-squared error (WMMSE) and greedy proportionally fair WMMSE schemes, as well as standard interior-point and sequential quadratic solvers. Importantly, our proposed scheme is also far more computationally efficient than the multicell WMMSE scheme.
\end{abstract}}

\begin{IEEEkeywords}
	Multiple input multiple output (MIMO), fractional programming, beamforming, scheduling, wireless cellular network, power control.
\end{IEEEkeywords}
% no keywords
\section{Introduction}
\baselineskip 7.5mm

The improvements in spectral efficiency, throughput and quality-of-service achieved by utilizing multiantenna networks have been extensively documented in the literature~\cite{bolcskei2006mimo},~\cite{hoydis2013massive}. In particular, optimizing the resource allocation in such multiantenna networks, by designing beamforming weights and scheduling specific users from the larger pool of potential users, is central to fully exploiting the finite wireless resources available and maximizing spectral efficiency~\cite{bjornson2013optimal},~\cite{dahrouj2010coordinated}. However, designing efficient resource allocation schemes remains challenging, since many utility functions of practical interest, such as sum-rate, sum-log-utility and min-rate, are inherently nonconvex functions of transmit powers; in fact, the associated optimization problems for each of these objective functions have been found to be NP-hard~\cite{luo2008dynamic}. Thus, solving these optimization problems to global optimality entails impractical computational complexity even for very small network sizes~\cite{liu2012achieving}.

One solution to these problems is to utilize single-cell schemes, such as zero-forcing or matched filtering, in which base stations ignore intercell interference when designing beamforming weights and making scheduling decisions \cite{suh2011downlink}. While far from globally optimal, such uncoordinated schemes offer three significant advantages: first, they are analytically tractable in the sense that they can be analyzed using tools from probability theory and stochastic geometry to yield accurate estimates of the data rates achieved by users (and hence objective functions like the network sum-rate)~\cite{hosseini2018optimizing}; second, these schemes are computationally efficient, especially compared to globally optimal techniques or iterative block coordinate descent based algorithms like weighted minimum mean squared error (WMMSE)~\cite{shi2011iteratively} processing; third, and likely most important, in these schemes each base-station (BS) requires channel state information (CSI) from only its own users, not for users in other cells\footnote{We note that, like other works that focus on algorithm development~\cite{luo2008dynamic, shi2011iteratively}, the acquistion of CSI, its overhead and quality is beyond the scope of this paper. However, we do acknowledge that this is a vitally important problem in wireless networks.}. As such, these uncoordinated schemes offer a useful benchmark against which to evaluate the performance of more sophisticated resource allocation strategies. 

Coordinated resource allocation schemes, in which base stations jointly design their scheduling and beamforming decisions, improve on uncoordinated schemes. {Such joint design leads to improved quality-of-service since it helps to mitigate the effects of both inter-cell and intra-cell interference \cite{an2017achieving}}. In doing so, however, such schemes inevitably incur increased computational complexity, as compared to uncoordinated schemes, since these algorithms optimize across multiple BSs. Since the objective functions for most utility maximization problems are nonconvex, such schemes typically rely on block coordinate ascent~\cite{shi2011iteratively}, successive convex approximation~\cite{weeraddana2012weighted} or other heuristic methods \cite{park2018iterative},~\cite{douik2016coordinated} to reach, at best, a local optimum. 

A number of coordinated schemes have been developed in the literature; for example, in~\cite{yu2013multicell} the authors develop an interference pricing and greedy proportionally fair (PF) scheduling algorithm to maximize the weighted sum rate (WSR) for the downlink. The proposed scheme demonstrates excellent performance in terms of average sum-log-utility but is not guaranteed to be nondecreasing in the objective function since greedy scheduling is used. In~\cite{weeraddana2012weighted}, Weeraddana et al.~propose an algorithm based on the successive convex approximation approach to optimize the needed beamforming weights and power allocation in order to solve the general WSR maximization problem for the downlink of a multiple input multiple output (MIMO) cellular network. The algorithm requires minimal exchange of information between cooperating BSs; however, the algorithm is also shown to underperform the WMMSE scheme of~\cite{shi2011iteratively}. Additionally, one alternative is to employ worst-case weighted sum-rate maximization \cite{chinnadurai2018worst}, although such an approach is generally better suited to settings with uncertainty in channel vectors.

The work in~\cite{shi2011iteratively} develops the WMMSE algorithm by demonstrating the equivalence of minimizing the weighted MSE and maximizing the WSR and adopting a block coordinate descent strategy to reach a (guaranteed) local optimum of the original WSR objective function. The algorithm iterates between obtaining beamforming weights and a set of auxiliary variables, optimizing one while the other is kept fixed. This WMMSE approach demonstrates excellent performance and is, thus, widely utilized as a benchmark against which the performance of other coordinated resource allocation schemes is compared. 

The work in~\cite{shi2011iteratively} does not address the important problem of user scheduling, i.e., choosing a set of users to serve from the larger set of available users. One solution is to use the multicell WMMSE scheme, with \emph{all users across all cells} are scheduled $-$ the BSs then jointly design beamforming weights for each and every user, as in \cite{shi2011iteratively}. Eventually, after a number of iterations, the power assigned (the norm of the beamforming vectors) to most users will be essentially zero and these users are, then, \emph{implicitly} not scheduled. However, since this set of ``unscheduled" users is unknown a priori, beamforming weights have to be calculated for \emph{all users} in the network for \textit{each iteration of the algorithm}. This is extremely computationally expensive since a large matrix needs to be inverted in each step. Furthermore, it is worth emphasizing that the multicell WMMSE scheme, as described, is \emph{not globally optimal}. Additionally, as the authors in \cite{shen2018fractional} have observed, when scheduling all users, the WMMSE algorithm tends to get stuck in a low-quality locally optimal solution. Despite these drawbacks, the WMMSE algorithm remains the benchmark against which other resource allocation algorithms are evaluated \cite{shen2018fractional2,kaleva2016decentralized,li2015new}.

A lower-complexity approach is to alternately optimize the scheduling and beamforming variables in a fashion similar to that proposed by~\cite{douik2018joint} and~\cite{zhang2017sum}, by alternately utilizing the WMMSE algorithm for beamforming and updating the scheduling decisions using the greedy PF scheme. However, because of the greedy step, this approach is also not guaranteed to be nondecreasing in the original WSR objective function. 

Globally optimal schemes to solve sum-rate and weighted-sum-rate optimization problems have also been formulated in the literature, using the framework of monotonic optimization~\cite{liu2012achieving, bjornson2013optimal, brehmer2012utility, utschick2012monotonic}, as well as geometric and arithmetic-mean methods \cite{roshandeh2018exact}. For example, the authors in~\cite{liu2012achieving} develop an algorithm to find the globally optimal beamformers to maximize the WSR for the downlink of a multiuser multiantenna network. However, as the authors in~\cite{brehmer2012utility} note, these globally optimal schemes require impractical computational complexity for even small systems, and are thus used almost exclusively as benchmarks for very small network sizes. 

{It is worth emphasizing that extensive CSI exchange between BSs and computational resources are required in order to enable both locally and globally optimal resource management schemes. These requirements are best served by the utilization of cloud radio access networks (CRANs), which allow for flexible deployment of resource allocation algorithms and on-demand processing while utilizing relatively inexpensive hardware at the BS \cite{quek2017cloud,peng2016recent,peng2014energy,qian2015baseband,gerasimenko2015cooperative}. Deploying dedicated hardware at the BS level to implement individual algorithms is both technically challenging and cost ineffective \cite{quek2017cloud,gerasimenko2015cooperative}; on the other hand, through the use of CRAN, low-cost remote radio heads can be utilized at each BS, while virtualized baseband processing units for the entire network can be implemented in the cloud and easily altered to enable different resource management schemes and capabilities \cite{qian2015baseband,peng2014energy}. Thus, the CRAN architecture is necessary in order to enable coordinated resource allocation and is stated explicitly or assumed implictly in the various coordinated schemes detailed in the literature \cite{weeraddana2012weighted}.}

In summary, effective multicell resource allocation schemes with relatively low complexity are, as yet, not available. It is this gap in the literature that we address here. Specifically, we develop an iterative scheduling and beamforming strategy to find an effective solution to the problem of maximizing the average sum-log-utility function for the downlink of a multiuser MIMO network. Using the framework of fractional programming, originally developed in~\cite{shen2018fractional2} and~\cite{shen2018fractional} for uplink problems and extended to the matrix setting in~\cite{shen2018coordinated}, we derive a scheme similar to a block coordinate ascent scheme. 
In \cite{shen2018fractional2}, fractional programming has shown large performance benefits for utility maximization in the \textit{uplink} setting. {Similarly, in \cite{zhang2018energy}, the authors utilize fractional programming to jointly optimize power control and scheduling decisions for energy efficiency maximization; the proposed algorithm can be implemented in both distributed and centralized fashion and provides excellent convergence and performance properties. In both these scenarios, the interference pattern changes with the scheduling decisions; thus, optimization across multiple cells can provide considerable benefit. This paper demonstrates the efficacy of fractional programming for the \textit{downlink}, where we exploit the \textit{fixed} interference pattern to improve performance and reduce computational complexity in the user scheduling step. 

In contrast to our conference-length work in ~\cite{khan2018optimizing}, this paper considers the most general setting of the problem: we derive the algorithm and present results for proportionally-fair WSR and sum-rate maximization through scheduling and beamforming across multiple frequency bands with both joint and decoupled power constraints across the bands. Deriving the algorithm for this setting is considerably more challenging than the single-band case considered in \cite{khan2018optimizing}; this is especially true for the joint power allocation across multiple bands in which, despite the orthogonality of the bands, \textit{all} beamforming weights and scheduling decisions become coupled due to the power constraint. Nonetheless, we demonstrate that fractional programming allows us to decouple these optimization variables and solve for an effective solution with guaranteed nondecreasing convergence. Specifically, the contributions of this paper are:}
\begin{itemize}
\item We formulate the downlink sum-log-utility maximization problem as a WSR problem in the general case of multiple interfering cells, multiple frequency bands, and a large number of potential users per cell.
\item We develop a joint beamforming and user scheduling algorithm based on fractional programming and the Hungarian algorithm. The Hungarian algorithm selects the optimal set of users from the much larger pool of potential users, for a given set of beamforming weights, in \textit{polynomial} time. The development of these two aspects of our overall algorithm is our key contribution, as the scheduling step allows us to reach an effective solution while simultaneously reducing computational complexity.
\item We compare the performance of joint power allocation across all frequency bands with the simpler case in which power constraints are de-coupled across bands. We show that the simpler approach, in fact, suffers little performance loss.
\item We show that each iteration of the proposed algorithm leads to nondecreasing objective function values; the overall algorithm outperforms several competing approaches, {including the state-of-the-art multicell WMMSE, as well as standard interior-point and sequential quadratic programming solvers widely utilized in the literature, with significantly lower computational complexity}.
\item Our proposed algorithm outperforms the aforementioned competing state-of-the-art approaches over a wide range of BS maximum transmit power values.
\end{itemize}
%The scheduling scheme we derive is based on the key observation that the interference pattern in the downlink is fixed given a fixed set of beamforming weights. By exploiting this insight, the scheduling scheme developed not only allows for guaranteed convergence of the algorithm to a stationary point of the original optimization problem but also reduces the computational complexity by iteratively selecting the set of users in each cell best suited to the given set of beamforming weights. Thus, the scheme we develop offers a desirable combination of high performance as well as a reasonable computational complexity.

This paper is organized as follows: In Section II, we present our system model and formulate the desired optimization problem. In Section III, we describe the proposed solution approach in detail, while also presenting a proof for its convergence. In Section IV, we present the results and compare the performance and computational complexity of the proposed scheme against the benchmarks described previously. We draw some conclusions in Section V.

{Prior to proceeding further, we define some notation used in this paper. $\mathbb{R}$, ${\mathbb{R}}_{{+}}$ and ${\mathbb{R}}_{{++}}$ represent the set of real numbers, non-negative real numbers and positive numbers respectively. We denote scalars using lowercase (eg. $x$), vectors using lowercase boldface (eg. $\mathbf{x}$), matrices using uppercase boldface (eg. $\mathbf{X}$) and sets using script typeface (eg. $\mathcal{X}$). The operator $\left|{.}\right|$ denotes the absolute value when applied to a scalar and cardinality when applied to a set; we use ${\left\|{.}\right\|}_{2}$ to denote the ${\ell}_{2}$-norm of a vector. The conjugate of a complex scalar $z$ is denoted by $z^{*}$; the Hermitian of a complex vector $\mathbf{z}$ is denoted by $\mathbf{z}^{H}$. Likewise, $\mathbb{C}$ represents the set of complex numbers. A complex multivariate normal distribution with mean $\boldsymbol{\mu}$ and covariance matrix $\mathbf{K}$ is denoted by ${\mathcal{CN}}\left({\boldsymbol{\mu}{,}\mathbf{K}}\right)$. Finally, $\mathbf{I}$ represents the identity matrix.}

\section{System Model and Problem Formulation}
{We consider the downlink of a wireless cellular network, with base-stations located in a regular hexagonal pattern; we denote the set of BSs in the network by $\mathcal{B}$. Each user associates with the geographically closest BS, with ${K}_{b}$ users associating with the ${b}^{\mathrm{th}}$ base-station; under the hexagonal grid layout of the BSs, this leads to identically sized hexagonal cells. We choose this hexagonal pattern purely for convenience; the derivations and algorithms that follow are applicable to any distribution of BSs in a multi-cell network}. We assume that {there are a total of $F$} orthogonal frequency bands, each of bandwidth $W$, available for transmission to each base-station. Each BS is equipped with $M$  transmit antennas which are capable of simultaneously transmitting on all available frequency bands; each user is equipped with a single receive antenna capable of simultaneously receiving signals on all available frequency bands. Furthermore, we also assume that the number of users associated with each BS significantly exceeds the number of transmit antennas available at the base-station (i.e., ${K}_{b}\mathrm{\gg}{M}$ for all $b$). Figure \ref{LS_MIMO_opt_problem_figure} illustrates the system at hand with hexagonal cells.

{
\begin{figure}[!]
        \begin{center} 
                \setlength\belowcaptionskip{-3.7\baselineskip}
                \includegraphics[trim={0 0cm 0 0cm},clip,height=0.44\textwidth]{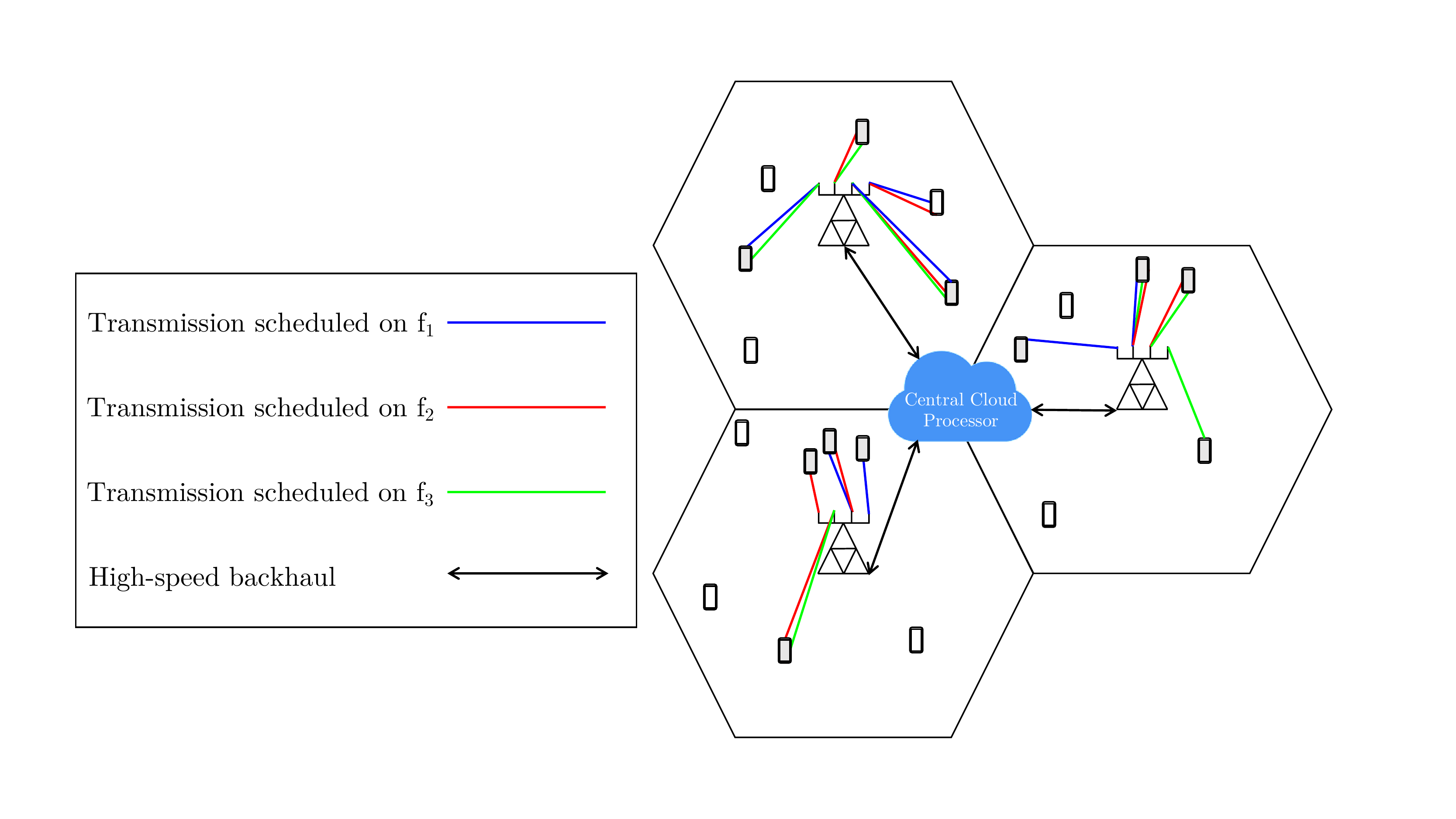}
                \centering
                \caption{{Network model for the proposed system model. As shown, each BS in the network serves $K$ users within its cell with $M$ transmit antennas, and is thus capable of serving multiple users on each of $F=3$ orthogonal frequency bands. BSs are connected to a central cloud processor via high-speed backhaul and forward their downlink CSI. This processor computes the scheduling and beamforming strategy and forwards it back to the BSs.}}
                \centering
                \centering
                \label{LS_MIMO_opt_problem_figure}
        \end{center}
\end{figure} 
}

Prior to stating the channel model, it is important to emphasize that this paper focuses on analyzing the best system-level performance. In this regard, we make two important assumptions that are common to the papers in this area~\cite{yu2013multicell},\cite{shen2018coordinated}. First, we assume that all the BSs have access to perfect CSI of all their associated users on all frequency bands. Second, we assume that all the BSs are connected via high-speed backhaul links to a central cloud server that is capable of performing system-level optimization based on the CSI received from each of the BSs, and relaying back the beamforming weight vectors and scheduling information; {thus our system model falls within the general realm of CRAN}. This is necessary for a coordinated transmission strategy to achieve the best system-level performance. 
{
The downlink channel from the ${b'}^{\mathrm{th}}$ base-station to the ${k}^{\mathrm{th}}$  user associated with the ${b}^{\mathrm{th}}$ BS on the ${f}^{\mathrm{th}}$ frequency band is a complex ${M}\mathrm{\times}{1}$ vector denoted by ${\mathbf{h}}_{kb\mathrm{,}b'\mathrm{,}f}$. As stated earlier, each BS serves only the set of users associated with it. Accordingly, the beamforming weight vector for the ${k}^{\mathrm{th}}$ user associated with the ${b}^{\mathrm{th}}$ BS on the ${f}^{\mathrm{th}}$ frequency band is a complex ${M}\mathrm{\times}{1}$ vector denoted as ${\mathbf{v}}_{kb\mathrm{,}f}$. 

In each time slot, each base-station schedules a \emph{subset} of its associated users. Specifically, we impose the constraint that each base-station can schedule no more than $M$ users per time slot on each available frequency band. The binary variable ${u}_{kb\mathrm{,}f}$ is used to indicate whether the ${k}^{\mathrm{th}}$ user associated with the ${b}^{\mathrm{th}}$ BS is scheduled ($u_{kb,f}=1$) or not ($u_{kb,f}=0$) on the ${f}^{\mathrm{th}}$ frequency band. The symbol intended for the ${k}^{\mathrm{th}}$ user associated with the ${b}^{\mathrm{th}}$ BS on the ${f}^{\mathrm{th}}$ frequency band is a complex scalar denoted by ${s}_{kb\mathrm{,}f}$. It follows that the received downlink signal ${r}_{kb\mathrm{,}f}$ at the  ${k}^{\mathrm{th}}$ user associated with the ${b}^{\mathrm{th}}$ base-station on the ${f}^{\mathrm{th}}$ frequency band is given by}
{
\begin{equation}
{r}_{kb,f}{=}{\mathbf{h}}_{kb,b,f}^{H}{\mathbf{v}}_{kb,f}{u}_{kb,f}{s}_{kb,f}{+}\mathop{\sum}\limits_{\mathop{{b}{'}{=}{1}}\limits_{{(}{b}\prime{,}{k}\prime{)}\ne}}\limits^{\left|{\mathcal{B}}\right|\hspace{0.33em}}{\mathop{\sum}\limits_{\mathop{{k}{'}{=}{1}}\limits_{(b,k)}}\limits^{{K}_{b'}}{{\mathbf{h}}_{kb,b',f}^{H}{\mathbf{v}}_{k'b',f}{u}_{k'b',f}{s}_{k'b',f}}}{+}{z}_{kb,f}
\end{equation}
where ${z}_{kb,f}$ denotes the additive zero-mean Gaussian noise with variance ${\mathit{\sigma}}_{kb\mathrm{,}f}^{2}$.
Thus, the signal-to-interference-plus-noise ratio for the user under consideration on the ${f}^{th}$ frequency band is given by
\[
{\mathit{\gamma}}_{kb\mathrm{,}f}\mathrm{{{}={}}}\frac{{u}_{kb\mathrm{,}f}\mathrm{|}{\mathbf{h}}_{kb\mathrm{,}b\mathrm{,}f}^{H}{\mathbf{v}}_{kb\mathrm{,}f}{\mathrm{|}}^{2}}{\mathop{\sum}\limits_{\mathop{{b}{\mathrm{'}}\mathrm{{{}={}}}{1}}\limits_{{\mathrm{(}}{b}\prime{\mathrm{,}}{k}\prime{\mathrm{)}}\kern0.2em\mathrm{\ne}}}\limits^{\left|{\mathcal{B}}\right|\hspace{0.33em}}{\mathop{\sum}\limits_{\mathop{{k}{\mathrm{'}}\mathrm{{{}={}}}{1}}\limits_{\mathrm{(}b\mathrm{,}k\mathrm{)}}}\limits^{{K}_{b\mathrm{'}}}{{u}_{k\mathrm{'}b\mathrm{',}f}{\mathrm{|}}{\mathbf{h}}_{kb\mathrm{,}b\mathrm{',}f}^{H}{\mathbf{v}}_{k\mathrm{'}b\mathrm{',}f}{\mathrm{|}}^{2}\mathrm{{+}}{\mathit{\sigma}}_{kb\mathrm{,}f}^{2}}}}
\]
}
Consequently, the data rate to the user on the ${f}^{th}$ frequency band is given by ${R}_{kb\mathrm{,}f}\mathrm{{{}={}}}{W}\log{\mathrm{(}}{1}\mathrm{{+}}{\mathit{\gamma}}_{kb\mathrm{,}f}{\mathrm{)}}$. Recalling that we have a total of $F$ frequency bands available to serve the user, it is clear that the combined data rate to the user in a time slot, denoted by ${R}_{kb\mathrm{,}total}$, is given by ${R}_{kb\mathrm{,}total}\mathrm{{{}={}}}\mathop{\sum}\limits_{{f}\mathrm{{{}={}}}{1}}\limits^{F}{{R}_{kb\mathrm{,}f}}$.

Now, we formulate the resource allocation problem. Our choice of the objective function is the network WSR.

The maximization of the weighted sum rate with an appropriate choice of weights in each time slot can lead to approximate maximization of arbitrary network utility functions. A popular network utility function is the sum of the logarithm of the long-term average data rates achieved by each of the users. This leads to a proportionally fair allocation of resources amongst all users in the network.
\newpage
We are interested in answering the following question: given a network as described above, to maximize the WSR in each time slot which subset of users should each BS serve, in which frequency band, at what power level and with what beamformer design? The optimization problem that encapsulates this question for a single time slot can be expressed as{
\begin{subequations} \label{LS_MIMO_equiv_problem}
        \begin{align}
        \mathop{\mathrm{maximize}}\limits_{{\mathbf{U}}{\mathbf{,}}{\mathbf{V}}{\mathbf{,}}\mathbf{\Gamma}}\hspace{2.16em}\mathop{\sum}\limits_{{b}\mathrm{{{}={}}}{1}}\limits^{\left|{\mathcal{B}}\right|}{\mathop{\sum}\limits_{{k}\mathrm{{{}={}}}{1}}\limits^{{K}_{b}}{\mathop{\sum}\limits_{{f}\mathrm{{{}={}}}{1}}\limits^{F}{{w}_{kb}\log\left({{1}\mathrm{{+}}{\mathit{\gamma}}_{kb\mathrm{,}f}}\right)}}}\hspace{17.78em}\label{aux_objective_LS_MIMO}\\
        {\mathrm{subject}}\hspace{0.33em}{\mathrm{to}}\hspace{2.20em}{u}_{kb\mathrm{,}f}\mathrm{\in}{\mathrm{\{}}{0}{\mathrm{,}}{1}{\mathrm{\},}}\hspace{6.55em}{b}\mathrm{{{}={}}}{1}{\mathrm{,...,}}\hspace{0.33em}\left|{\mathcal{B}}\right|{\mathrm{;}}\hspace{0.33em}{k}\mathrm{{{}={}}}{1}{\mathrm{,...,}}\hspace{0.33em}{K}_{b}{\mathrm{;}}\hspace{0.33em}{f}\mathrm{{{}={}}}{1}{\mathrm{,...,}}\hspace{0.33em}{F}\hspace{0.47em}\label{c1_LS_MIMO_equiv}\\
        \mathop{\sum}\limits_{{k}\mathrm{{{}={}}}{1}}\limits^{{K}_{b}}{{u}_{kb\mathrm{,}f}}\hspace{0.33em}\mathrm{\leq}\hspace{0.33em}{M}{\mathrm{,}}\hspace{5.44em}{b}\mathrm{{{}={}}}{1}{\mathrm{,...,}}\hspace{0.33em}\left|{\mathcal{B}}\right|{\mathrm{;}}\hspace{0.33em}{f}\mathrm{{{}={}}}{1}{\mathrm{,...,}}\hspace{0.33em}{F}\hspace{6.50em}\label{c2_LS_MIMO_equiv}\\
        \mathop{\sum}\limits_{{f}\mathrm{{{}={}}}{1}}\limits^{F}{\mathop{\sum}\limits_{{k}\mathrm{{{}={}}}{1}}\limits^{{K}_{b}}{{\left\|{{\mathbf{v}}_{kb,f}}\right\|}_{2}^{2}\hspace{0.33em}\mathrm{\leq}\hspace{0.33em}{F}{P}_{T}}}{\mathrm{,}}\hspace{1.58em}{b}\mathrm{{{}={}}}{1}{\mathrm{,...,}}\hspace{0.33em}\left|{\mathcal{B}}\right|\hspace{12.20em}\label{c3_LS_MIMO_equiv}\\
        {\mathit{\gamma}}_{kb\mathrm{,}f}\mathrm{{{}={}}}\frac{{u}_{kb\mathrm{,}f}\mathrm{|}{\mathbf{h}}_{kb\mathrm{,}b\mathrm{,}f}^{H}{\mathbf{v}}_{kb\mathrm{,}f}{\mathrm{|}}^{2}}{\mathop{\sum}\limits_{\mathop{{b}{\mathrm{'}}\mathrm{{{}={}}}{1}}\limits_{{\mathrm{(}}{b}\prime{\mathrm{,}}{k}\prime{\mathrm{)}}\kern0.2em\mathrm{\ne}}}\limits^{\left|{\mathcal{B}}\right|\hspace{0.33em}}{\mathop{\sum}\limits_{\mathop{{k}{\mathrm{'}}\mathrm{{{}={}}}{1}}\limits_{\mathrm{(}b\mathrm{,}k\mathrm{)}}}\limits^{{K}_{b\mathrm{'}}}{{u}_{k\mathrm{'}b\mathrm{'}}{\mathrm{|}}{\mathbf{h}}_{kb\mathrm{,}b\mathrm{'}\mathrm{,}f}^{H}{\mathbf{v}}_{k\mathrm{'}b\mathrm{'}\mathrm{,}f}{\mathrm{|}}^{2}\mathrm{{+}}{\mathit{\sigma}}_{kb\mathrm{,}f}^{2}}}}\hspace{2.33em}\begin{array}{c}{{b}{{}={}}{1}{,}\ldots{,}\left|{\mathcal{B}}\right|{;}}\\{{k}{{}={}}{1}{,}\ldots{,}{K}_{b}{;}\hspace{0.50em}}\\{{f}{{}={}}{1}{,}\ldots{,}{F}\hspace{1.25em}}\end{array}\hspace{1.80em}\label{equality_constraint_aux_objective_LS_MIMO_equiv}
        \end{align}
\end{subequations}    

Here, ${w}_{kb}$ represents the weight for the ${k}^{\mathrm{th}}$ user associated with the ${b}^{\mathrm{th}}$ base-station\footnote{The proportionally fair weight for the ${n}^{\mathrm{th}}$ time slot is usually determined by finding the inverse of the long-term average data rate achieved by the user in question over an exponentially decaying window \cite{yu2010book}, i.e.,
$\label{updated_weights_LS_MIMO}{w}_{kb}^{(n)}{=}{1}{/}{\bar{R}}_{kb}^{(n)}$,
\noindent where ${\bar{R}}_{kb}^{(n)}$ represents the exponentially weighted average data rate achieved by the user in the time slots preceding the ${n}^{\mathrm{th}}$ time slot across all frequency bands. This is calculated using the update equation ${\bar{R}}_{kb}^{(n)}{{}={}}{(}{1}{-}\mathit{\alpha}{)}{\bar{R}}_{kb}^{{(}{n}{-}{1}{)}}{+}\mathit{\alpha}{R}_{kb}^{(n)}$,
where ${R}_{kb}^{(n)}$ represents the total data rate achieved in the $n^{th}$ time slot across all bands, and $\alpha$ represents the forgetting factor. 
}, while $\mathbf{U}$ and $\mathbf{V}$ denote the optimization variables gathered into a matrix: the scheduling variables ($\mathbf{U}$) and the beamformers ($\mathbf{V}$). For simplicity of notation, we drop the index denoting the time slot in the formulation of the optimization problem in (\ref{LS_MIMO_equiv_problem}). The SINR of the ${k}^{\mathrm{th}}$ user associated with the ${b}^{\mathrm{th}}$ BS on the ${f}^{\mathrm{th}}$ frequency band is denoted by ${\mathit{\gamma}}_{kb\mathrm{,}f}$.

Our objective function is the network WSR for a single time slot as expressed in (\ref{aux_objective_LS_MIMO}). The constraint in (\ref{c1_LS_MIMO_equiv}) enforces the scheduling decisions by the BS to be binary; a user is scheduled if its scheduling variable equals one, and vice versa. The second set of constraints in (\ref{c2_LS_MIMO_equiv}) ensures that a BS is restricted to serving a subset of at most \textit{M} users from the set of all associated users on each available frequency band. The constraints in (\ref{c3_LS_MIMO_equiv}) impose a total transmit power constraint ${FP}_{T}$ at each BS across the different frequency bands (and thus an average power constraint of ${P}_{T}$ across each individual frequency band). Finally, the equality constraints in (\ref{equality_constraint_aux_objective_LS_MIMO_equiv}) enforce the SINR values at each user, BS and frequency band.}
\section{Proposed Approach} \label{ss proposed algorithm single freq}
We note that the optimization problem in~\eqref{LS_MIMO_equiv_problem} has a mixed-binary integer form and is nonconvex in the beamforming variables. In fact, as stated earlier, the general WSR maximization problem has been shown to be NP-hard by Luo and Zhang in~\cite{luo2008dynamic}. {We also note that the beamforming and scheduling variables are coupled across the different frequency bands due to the sum-power constraint in (\ref{c3_LS_MIMO_equiv})), even though the bands themselves are orthogonal and thus noninterfering}. To solve this problem and obtain an effective solution, we adopt an iterative optimization strategy, based on fractional programming as developed by Shen and Yu in~\cite{shen2018fractional} and~\cite{shen2018fractional2}. 

{We begin by introducing Lagrange multipliers for each of the equality constraints in (\ref{equality_constraint_aux_objective_LS_MIMO_equiv}), in a similar fashion to \cite{shen2018fractional2}, as follows}

\begin{equation} \label{lagrangian_LS_MIMO}
{\mathcal{L}}{\mathrm{(}}{\mathbf{U}}{\mathbf{,}}{\mathbf{V}}{\mathbf{,}}\mathbf{\Gamma}{\mathbf{,}}\mathbf{\Lambda}{\mathrm{)}}\mathrm{{{}={}}}\hspace{0.33em}\mathop{\sum}\limits_{{b}\mathrm{{{}={}}}{1}}\limits^{\left|{\mathcal{B}}\right|}{\mathop{\sum}\limits_{{k}\mathrm{{{}={}}}{1}}\limits^{{K}_{b}}{\mathop{\sum}\limits_{{f}\mathrm{{{}={}}}{1}}\limits^{F}{\left[{{w}_{kb}\log\left({{1}\mathrm{{+}}{\mathit{\gamma}}_{kb\mathrm{,}f}}\right)\mathrm{{-}}{\mathit{\lambda}}_{kb\mathrm{,}f}\left({{\mathit{\gamma}}_{kb\mathrm{,}f}\mathrm{{-}}\frac{{u}_{kb\mathrm{,}f}\mathrm{|}{\mathbf{h}}_{kb\mathrm{,}b\mathrm{,}f}^{H}{\mathbf{v}}_{kb\mathrm{,}f}{\mathrm{|}}^{2}}{\mathop{\sum}\limits_{\mathop{{b}{\mathrm{'}}\mathrm{{{}={}}}{1}}\limits_{{\mathrm{(}}{b}\prime{\mathrm{,}}{k}\prime{\mathrm{)}}\kern0.2em\mathrm{\ne}}}\limits^{\left|{\mathcal{B}}\right|\hspace{0.33em}}{\mathop{\sum}\limits_{\mathop{{k}{\mathrm{'}}\mathrm{{{}={}}}{1}}\limits_{\mathrm{(}b\mathrm{,}k\mathrm{)}}}\limits^{{K}_{b\mathrm{'}}}{{u}_{k\mathrm{'}b\mathrm{',}f}{\mathrm{|}}{\mathbf{h}}_{kb\mathrm{,}b\mathrm{',}f}^{H}{\mathbf{v}}_{k\mathrm{'}b\mathrm{',}f}{\mathrm{|}}^{2}\mathrm{{+}}{\mathit{\sigma}}_{kb\mathrm{,}f}^{2}}}}}\right)}\right]}}}
\end{equation}
where the ${\mathit{\lambda}}_{kb\mathrm{,}f}$ represent the Lagrange multipliers for each of the equality constraints in (\ref{equality_constraint_aux_objective_LS_MIMO_equiv}).
For notational clarity, the SINR auxiliary variables and Lagrange multipliers are collected in the matrices $\mathbf{\Gamma}$ and $\mathbf{\Lambda}$ respectively. Consequently, in order to satisfy the first-order condition with respect to the $\gamma_{kb,f}$ values, we set the partial derivative with respect to the Lagrangian equal to zero, i.e.,
\begin{equation} \label{optimal_gamma_LS_MIMO}
\frac{\partial\mathcal{L}(\mathbf{U},\mathbf{V},\boldsymbol{\Gamma},\boldsymbol{\Lambda})}{\partial\gamma_{kb,f}} = 0.
%\frac{\partial{\mathcal{L}}{(}{\mbox{\boldmath $U$}}{,}{\mbox{\boldmath $V$}}{,}\mathbf{\Gamma}{,}\mathbf{\Lambda}{)}}{\partial{\gamma}_{kb,f}}{{}={}}{0\hspace{1.5em}\Rightarrow\hspace{1.5em}{\gamma}_{kb,f}{{}={}}\frac{{w}_{kb}}{{\lambda}_{kb,f}}{-}{1}{,}\hspace{2em}{b}{{}={}}{1}{,...}\left|{\mathcal{B}}\right|{;}\hspace{0.33em}{k}{{}={}}{1}{,...}{K}_{b}{;}\hspace{0.33em}{f}{{}={}}{1}{,...}{F}
\end{equation}
Now substituting (\ref{optimal_gamma_LS_MIMO}) into the equality constraint (\ref{equality_constraint_aux_objective_LS_MIMO_equiv}), we then obtain the optimal Lagrange multipliers as
\begin{equation} \label{optimal_lagrange_multiplier_single_frequency}
{\mathit{\lambda}}_{kb\mathrm{,}f\mathrm{,}opt}\mathrm{{{}={}}}\frac{{w}_{kb}\left({\mathop{\sum}\limits_{\mathop{{b}{\mathrm{'}}\mathrm{{{}={}}}{1}}\limits_{{\mathrm{(}}{b}\prime{\mathrm{,}}{k}\prime{\mathrm{)}}\kern0.2em\mathrm{\ne}}}\limits^{\left|{\mathcal{B}}\right|\hspace{0.33em}}{\mathop{\sum}\limits_{\mathop{{k}{\mathrm{'}}\mathrm{{{}={}}}{1}}\limits_{\mathrm{(}b\mathrm{,}k\mathrm{)}}}\limits^{{K}_{b\mathrm{'}}}{{u}_{k\mathrm{'}b\mathrm{'}\mathrm{,}f}{\mathrm{|}}{\mathbf{h}}_{kb\mathrm{,}b\mathrm{'}\mathrm{,}f}^{H}{\mathbf{v}}_{k\mathrm{'}b\mathrm{'}\mathrm{,}f}{\mathrm{|}}^{2}\mathrm{{+}}{\mathit{\sigma}}_{kb\mathrm{,}f}^{2}}}}\right)}{\mathop{\sum}\limits_{{b}{\mathrm{'}}\mathrm{{{}={}}}{1}}\limits^{\left|{\mathcal{B}}\right|}{\mathop{\sum}\limits_{{k}{\mathrm{'}}\mathrm{{{}={}}}{1}}\limits^{{K}_{b\mathrm{'}}}{{u}_{k\mathrm{'}b\mathrm{'}\mathrm{,}f}{\mathrm{|}}{\mathbf{h}}_{kb\mathrm{,}b\mathrm{'}\mathrm{,}f}^{H}{\mathbf{v}}_{k\mathrm{'}b\mathrm{'}\mathrm{,}f}{\mathrm{|}}^{2}\mathrm{{+}}{\mathit{\sigma}}_{kb\mathrm{,}f}^{2}}}}\hspace{0.33em}
\end{equation}
Substituting the optimal Lagrange multipliers from (\ref{optimal_lagrange_multiplier_single_frequency}) into the expression for the Lagrangian in (\ref{lagrangian_LS_MIMO}), we obtain the following reformulated objective function, which we denote by ${f}_{r}{\mathrm{(}}{\mathbf{U}}{\mathbf{,}}{\mathbf{V}}{\mathbf{,}}\mathbf{\Gamma}{\mathrm{)}}$.
\begin{equation} \label{f_r_single_frequency}
\begin{gathered}
{{f}_{r}{\mathrm{(}}{\mathbf{U}}{\mathrm{,}}{\mathbf{V}}{\mathrm{,}}\mathbf{\Gamma}{\mathrm{)}}\mathrm{{{}={}}}\mathop{\sum}\limits_{{b}\mathrm{{{}={}}}{1}}\limits^{\left|{\mathcal{B}}\right|}{\mathop{\sum}\limits_{{k}\mathrm{{{}={}}}{1}}\limits^{{K}_{b}}{\mathop{\sum}\limits_{{f}\mathrm{{{}={}}}{1}}\limits^{F}{{w}_{kb}\log\left({{1}\mathrm{{+}}{\mathit{\gamma}}_{kb\mathrm{,}f}}\right)\mathrm{{-}}\mathop{\sum}\limits_{{b}\mathrm{{{}={}}}{1}}\limits^{\left|{\mathcal{B}}\right|}{\mathop{\sum}\limits_{{k}\mathrm{{{}={}}}{1}}\limits^{{K}_{b}}{\mathop{\sum}\limits_{{f}\mathrm{{{}={}}}{1}}\limits^{F}{{w}_{kb}{\mathit{\gamma}}_{kb\mathrm{,}f}}}}}}}} \hspace{06.00em}\\
{\mathrm{{+}}\mathop{\sum}\limits_{{b}\mathrm{{{}={}}}{1}}\limits^{\left|{\mathcal{B}}\right|}{\mathop{\sum}\limits_{{k}\mathrm{{{}={}}}{1}}\limits^{{K}_{b}}{\mathop{\sum}\limits_{{f}\mathrm{{{}={}}}{1}}\limits^{F}{\frac{{w}_{kb}{\mathrm{(}}{1}\mathrm{{+}}{\mathit{\gamma}}_{kb\mathrm{,}f}{\mathrm{)}}{u}_{kb\mathrm{,}f}{\mathrm{|}}{\mathbf{h}}_{kb\mathrm{,}b\mathrm{,}f}^{H}{\mathbf{v}}_{kb\mathrm{,}f}{\mathrm{|}}^{2}}{\mathop{\sum}\limits_{{b}{\mathrm{'}}\mathrm{{{}={}}}{1}}\limits^{\left|{\mathcal{B}}\right|}{\mathop{\sum}\limits_{{k}{\mathrm{'}}\mathrm{{{}={}}}{1}}\limits^{{K}_{b\mathrm{'}}}{{u}_{k\mathrm{'}b\mathrm{',}f}{\mathrm{|}}{\mathbf{h}}_{kb\mathrm{,}b\mathrm{',}f}^{H}{\mathbf{v}}_{k\mathrm{'}b\mathrm{',}f}{\mathrm{|}}^{2}\mathrm{{+}}{\mathit{\sigma}}_{kb\mathrm{,}f}^{2}}}}}}}\hspace{1.90em}}
\end{gathered}
\end{equation}
Thus, it follows that the original optimization problem in (\ref{LS_MIMO_equiv_problem}) can be expressed as the following reformulated optimization problem
\begin{subequations} \label{first_reformulated_problem_LS_MIMO}
        \begin{align}
        \mathop{\mathrm{maximize}}\limits_{{\mathbf{U}}{\mathrm{,}}{\mathbf{V}}{\mathbf{,}}\mathbf{\Gamma}}\hspace{2.30em}{f}_{r}{\mathrm{(}}{\mathbf{U}}{\mathbf{,}}{\mathbf{V}}{\mathbf{,}}\mathbf{\Gamma}{\mathrm{)}}\hspace{24.80em}\\
        {\mathrm{subject}}\hspace{0.33em}{\mathrm{to}}\hspace{2.20em}{u}_{kb\mathrm{,}f}\mathrm{\in}{\mathrm{\{}}{0}{\mathrm{,}}{1}{\mathrm{\},}}\hspace{6.69em}{b}\mathrm{{{}={}}}{1}{\mathrm{,...,}}\hspace{0.33em}\left|{\mathcal{B}}\right|{\mathrm{;}}\hspace{0.33em}{k}\mathrm{{{}={}}}{1}{\mathrm{,...,}}\hspace{0.33em}{K}_{b}{\mathrm{;}}\hspace{0.33em}{f}\mathrm{{{}={}}}{1}{\mathrm{,...,}}\hspace{0.33em}{F}\hspace{0.33em}\\
        \mathop{\sum}\limits_{{k}\mathrm{{{}={}}}{1}}\limits^{{K}_{b}}{{u}_{kb\mathrm{,}f}}\hspace{0.33em}\mathrm{\leq}\hspace{0.33em}{M}{\mathrm{,}}\hspace{5.54em}{b}\mathrm{{{}={}}}{1}{\mathrm{,...,}}\hspace{0.33em}\left|{\mathcal{B}}\right|{\mathrm{;}}\hspace{0.33em}{f}\mathrm{{{}={}}}{1}{\mathrm{,...,}}\hspace{0.33em}{F}\hspace{6.30em}\\
        \mathop{\sum}\limits_{{f}\mathrm{{{}={}}}{1}}\limits^{F}{\mathop{\sum}\limits_{{k}\mathrm{{{}={}}}{1}}\limits^{{K}_{b}}{{\left\|{{\mathbf{v}}_{kb}}\right\|}_{2}^{2}\hspace{0.33em}\mathrm{\leq}\hspace{0.33em}{F}{P}_{T}}}{\mathrm{,}}\hspace{2.14em}{b}\mathrm{{{}={}}}{1}{\mathrm{,...,}}\hspace{0.33em}\left|{\mathcal{B}}\right|\hspace{12.08em}
        \end{align}
\end{subequations}
 We note that the reformulated optimization problem in (\ref{first_reformulated_problem_LS_MIMO}) is equivalent to the original optimization problem in (\ref{LS_MIMO_equiv_problem}) in the sense that the optimal objective function value and the associated primal optimization variables, $\mathbf{U}$ and $\mathbf{V}$, for both problems are identical.

A key point to note before we proceed further is that the ratio terms which were present inside the logarithm function in problem (\ref{LS_MIMO_equiv_problem}) have now been moved outside as the sum-of-ratios term in the reformulated objective function. This is the first step in allowing us to develop an iterative optimization strategy.
In order to proceed further, we make use of the following theorem, as a vector-valued version of that derived by Shen and Yu in~\cite{shen2018fractional2}:
\vspace{-1.50em}
\newtheorem{theorem}{Theorem}
\begin{theorem} \label{Thm1}
        \textit{Let} ${n}_{i}{\mathrm{(}}{\mathbf{x}}{\mathrm{)}}{\mathrm{:}}\hspace{0.33em}{\mathbb{C}}^{m}\hspace{0.1em}\mathrm{\longmapsto}\hspace{0.1em}{\mathbb{C}}$ \textit{and} ${d}_{i}{\mathrm{(}}{\mathbf{x}}{\mathrm{)}}{\mathrm{:}}\hspace{0.33em}{\mathbb{C}}^{m}\hspace{0.1em}\mathrm{\longmapsto}\hspace{0.1em}{\mathbb{R}}_{\mathrm{{+}}\mathrm{{+}}}$, \textit{where} ${i}\mathrm{{{}={}}}{1}{\mathrm{,...,}}\hspace{0.33em}{N}$ \textit{be two functions of the optimization variables $\mathbf{x}$ and} $
        {m}\hspace{0.1em}\mathrm{\in}\hspace{0.1em}{\mathbb{N}}$\textit{. Furthermore, let} ${\mathcal{X}}\mathrm{\subseteq}{\mathbb{C}}^{m}$ \textit{be a constraint set. Then the sum-of-ratios optimization problem}
        \begin{equation} \label{quad_transform_original_objective}
        \begin{gathered}
        {\mathop{\mathrm{maximize}}\limits_{\mathbf{x}}\hspace{0.33em}\hspace{0.33em}\hspace{0.60em}\mathop{\sum}\limits_{{i}\mathrm{{{}={}}}{1}}\limits^{N}{\frac{\mathrm{|}{\mathit{n}}_{\mathit{i}}{\mathrm{(}\mathbf{x}\mathrm{)}\mathrm{|}}^{2}}{{\mathit{d}}_{\mathit{i}}\mathrm{(}\mathbf{x}\mathrm{)}}}\hspace{0.33em}\hspace{0.33em}\hspace{0.33em}\hspace{0.33em}\hspace{0.33em}\hspace{0.33em}\hspace{0.33em}\hspace{0.33em}\hspace{0.33em}\hspace{0.33em}\hspace{0.33em}\hspace{0.33em}\hspace{0.33em}} \hfill\\
        {{\mathrm{subject}}\hspace{0.33em}{\mathrm{to}}\hspace{0.00em}\hspace{1.26em}{\mathbf{x}}\mathrm{\in}{\mathcal{X}}\hspace{0.33em}\hspace{0.33em}\hspace{0.33em}\hspace{0.33em}} \hfill
        \end{gathered}
        \end{equation}
        \textit{is equivalent to the following reformulated optimization problem}
        \begin{equation} \label{transformed_quadratic}
        \begin{gathered}
        {\mathop{\mathrm{maximize}}\limits_{\mathbf{x}\mbox{$,$}\mathbf{y}}\hspace{0.33em}\hspace{0.33em}\hspace{0.33em}\hspace{0.33em}\mathop{\sum}\limits_{{i}\mathrm{{{}={}}}{1}}\limits^{N}{[{2}{\mathrm{Re}}{\mathrm{\{}}{y}_{i}^{\mathrm{*}}{n}_{i}{\mathrm{(}}{\mathbf{x}}{\mathrm{)\}}}\mathrm{{-}}{\mathrm{|}}{y}_{i}{\mathrm{|}}^{2}{d}_{i}{\mathrm{(}}{\mathbf{x}}{\mathrm{)]}}}} \hfill\\
        {\mathrm{subject}}\hspace{0.33em}{\mathrm{to}}\hspace{0em}\hspace{0.33em}\hspace{0.33em}\hspace{0.33em}\hspace{0.33em}{\mathbf{x}}\mathrm{\in}{\mathcal{X}}{\mathrm{,}}\hspace{0.33em}{\mathbf{y}}\mathrm{\in}{\mathbb{C}}^{N}\hfill
        \end{gathered}
        \end{equation}
        \textit{in the sense that the optimal values of the objective function and primal optimization variables are identical. Note that the vector $
                \mathrm{\mathbf{y}}=
                [{y}_{1}\hspace{0.33em} {y}_{2}\hspace{0.33em} \mathrm{\cdots}\hspace{0.33em} {y}_{\textit{N}}]$ is an auxiliary variable.}   
\end{theorem}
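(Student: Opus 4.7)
The plan is to verify the equivalence by reducing the reformulated problem to the original one via a partial maximization over the auxiliary vector $\mathbf{y}$. Fix any $\mathbf{x} \in \mathcal{X}$ and consider the inner function
\[
g_i(y_i; \mathbf{x}) = 2\operatorname{Re}\{y_i^{*} n_i(\mathbf{x})\} - |y_i|^2 d_i(\mathbf{x}).
\]
Since $d_i(\mathbf{x}) > 0$ by assumption, $g_i$ is a strictly concave quadratic in $y_i$ (viewing $y_i$ as a pair of real variables $\operatorname{Re}(y_i), \operatorname{Im}(y_i)$), and it decomposes across $i$, so the joint maximum over $\mathbf{y} \in \mathbb{C}^N$ exists, is unique, and is attained at an interior point.

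Next I would obtain the optimal $y_i$ in closed form. Applying Wirtinger calculus, the stationarity condition $\partial g_i / \partial y_i^{*} = 0$ yields $n_i(\mathbf{x}) - y_i\, d_i(\mathbf{x}) = 0$, so that
\[
y_i^{\star}(\mathbf{x}) = \frac{n_i(\mathbf{x})}{d_i(\mathbf{x})}.
\]
Substituting back gives
\[
g_i(y_i^{\star}(\mathbf{x}); \mathbf{x}) = \frac{2|n_i(\mathbf{x})|^2}{d_i(\mathbf{x})} - \frac{|n_i(\mathbf{x})|^2}{d_i(\mathbf{x})} = \frac{|n_i(\mathbf{x})|^2}{d_i(\mathbf{x})},
\]
so that after maximizing over $\mathbf{y}$ the reformulated objective coincides pointwise with the original objective.

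From here the equivalence follows by a standard double-maximization argument: for any feasible $\mathbf{x}$,
\[
\max_{\mathbf{y} \in \mathbb{C}^N} \sum_{i=1}^{N} g_i(y_i; \mathbf{x}) = \sum_{i=1}^{N} \frac{|n_i(\mathbf{x})|^2}{d_i(\mathbf{x})},
\]
and consequently
\[
\max_{\mathbf{x} \in \mathcal{X}, \mathbf{y} \in \mathbb{C}^N} \sum_i g_i(y_i; \mathbf{x}) = \max_{\mathbf{x} \in \mathcal{X}} \sum_i \frac{|n_i(\mathbf{x})|^2}{d_i(\mathbf{x})}.
\]
Moreover, if $\mathbf{x}^{\star}$ attains the original maximum, then $(\mathbf{x}^{\star}, \mathbf{y}^{\star}(\mathbf{x}^{\star}))$ attains the reformulated maximum; conversely, if $(\mathbf{x}^{\star}, \mathbf{y}^{\star})$ attains the reformulated maximum, then uniqueness of the inner optimizer forces $\mathbf{y}^{\star} = \mathbf{y}^{\star}(\mathbf{x}^{\star})$, and $\mathbf{x}^{\star}$ attains the original maximum. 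This gives the claimed equivalence in both objective values and primal optimizers.

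The only subtle point, and the place I would spend the most care, is the derivative step in the complex setting: one must either argue via Wirtinger derivatives or split $y_i$ into real and imaginary parts and verify that the Hessian is $-d_i(\mathbf{x}) I_2 \prec 0$, which relies critically on the strict positivity assumption $d_i(\mathbf{x}) \in \mathbb{R}_{++}$. Everything else is an algebraic substitution. Note also that no convexity or continuity is required of $n_i$ or $d_i$ in $\mathbf{x}$, nor any structure on $\mathcal{X}$, since the argument is pointwise in $\mathbf{x}$.
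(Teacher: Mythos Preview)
Your proposal is correct and follows essentially the same approach as the paper: observe that the reformulated objective is concave in $\mathbf{y}$, set the (Wirtinger) partial derivative with respect to $y_i^{*}$ to zero to obtain $y_{i,\mathrm{opt}} = n_i(\mathbf{x})/d_i(\mathbf{x})$, and substitute back to recover the original sum-of-ratios objective. Your write-up is more careful than the paper's (which is only three sentences), explicitly spelling out the double-maximization argument and the role of $d_i(\mathbf{x})>0$, but the underlying idea is identical.
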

\begin{proof} We first observe that the reformulated objective function in (\ref{transformed_quadratic}) is concave in \textbf{y}. Thus, setting the partial derivative of this objective with respect to ${y}_{i}^{\mathrm{*}}$, we obtain
\[
{y}_{i,opt}{{}={}}\frac{{n}_{i}(\mathbf{x})}{{d}_{i}(\mathbf{x})}
\]
        Substituting these values back into the objective function in (\ref{transformed_quadratic}) yields the objective function in (\ref{quad_transform_original_objective}).
\end{proof}
Applying Theorem \ref{Thm1} to the sum-of-ratios term in ${f}_{r}{\mathrm{(}}{\mathbf{U}}{\mathbf{,}}{\mathbf{V}}{\mathbf{,}}\mathbf{\Gamma}{\mathrm{)}}$ in (\ref{f_r_single_frequency}), we obtain the following new objective function
\begin{equation} \label{f_q_LS_MIMO}
\begin{gathered}
{f}_{q}{(}{\mathbf{U}}{\mathbf{,}}{\mathbf{V}}{\mathbf{,}}\mathbf{\Gamma}{\mathbf{,}}{\mathbf{Y}}{)}{=}\mathop{\sum}\limits_{{b}{=}{1}}\limits^{\left|{\mathcal{B}}\right|}{\mathop{\sum}\limits_{{k}{=}{1}}\limits^{{K}_{b}}{\mathop{\sum}\limits_{{f}{=}{1}}\limits^{F}{\left({{w}_{kb}{\left[{\log\left({{1}{+}{\gamma}_{kb,f}}\right){-}{\gamma}_{kb,f}}\right]}{-}\mathop{\sum}\limits_{{b}{'}{=}{1}}\limits^{\left|{\mathcal{B}}\right|}{\mathop{\sum}\limits_{{k}{'}{=}{1}}\limits^{{K}_{b'}}{{u}_{k'b',f}}{\left|{{y}_{kb}}\right|}^{2}{\left({{|}{\mathbf{h}}_{kb,b',f}^{H}{\mathbf{v}}_{k'b',f}{|}^{2}{+}{\sigma}_{kb,f}^{2}}\right)}}}\right)}}}\\
{+}\mathop{\sum}\limits_{{b}{=}{1}}\limits^{\left|{\mathcal{B}}\right|}{\mathop{\sum}\limits_{{k}{=}{1}}\limits^{{K}_{b}}{\mathop{\sum}\limits_{{f}{=}{1}}\limits^{F}{2Re\{{y}_{kb,f}^{*}\sqrt{{w}_{kb}{(}{1}{+}{\gamma}_{kb,f}{)}}{u}_{kb,f}{\mathbf{v}}_{kb,f}^{H}{\mathbf{h}}_{kb,b,f}\}}}}
\end{gathered}
\end{equation}
and accordingly, (\ref{first_reformulated_problem_LS_MIMO}) can be expressed as the equivalent optimization problem below
\begin{subequations} \label{final_reformulated_problem_LS_MIMO}
        \begin{align}
        \mathop{\mathrm{maximize}}\limits_{{\mathbf{U}}{\mathrm{,}}{\mathbf{V}}{\mathbf{,}}\mathbf{\Gamma}{\mathbf{,}}{\mathbf{Y}}}\hspace{2.30em}{f}_{q}{\mathrm{(}}{\mathbf{U}}{\mathbf{,}}{\mathbf{V}}{\mathbf{,}}\mathbf{\Gamma}{\mathbf{,}}{\mathbf{Y}}{\mathrm{)}}\hspace{22.83em}\\
        {\mathrm{subject}}\hspace{0.33em}{\mathrm{to}}\hspace{2.20em}{u}_{kb\mathrm{,}f}\mathrm{\in}{\mathrm{\{}}{0}{\mathrm{,}}{1}{\mathrm{\},}}\hspace{5.69em}{b}\mathrm{{{}={}}}{1}{\mathrm{,...,}}\hspace{0.33em}\left|{\mathcal{B}}\right|{\mathrm{;}}\hspace{0.33em}{k}\mathrm{{{}={}}}{1}{\mathrm{,...,}}\hspace{0.33em}{K}_{b}{\mathrm{;}}\hspace{0.33em}{f}\mathrm{{{}={}}}{1}{\mathrm{,...,}}\hspace{0.33em}{F}\hspace{0.53em}\\
        \mathop{\sum}\limits_{{k}\mathrm{{{}={}}}{1}}\limits^{{K}_{b}}{{u}_{kb\mathrm{,}f}}\hspace{0.33em}\mathrm{\leq}\hspace{0.33em}{M}{\mathrm{,}}\hspace{4.54em}{b}\mathrm{{{}={}}}{1}{\mathrm{,...,}}\hspace{0.33em}\left|{\mathcal{B}}\right|{\mathrm{;}}\hspace{0.33em}{f}\mathrm{{{}={}}}{1}{\mathrm{,...,}}\hspace{0.33em}{F}\hspace{6.45em}\\
        \mathop{\sum}\limits_{{f}\mathrm{{{}={}}}{1}}\limits^{F}{\mathop{\sum}\limits_{{k}\mathrm{{{}={}}}{1}}\limits^{{K}_{b}}{{\left\|{{\mathbf{v}}_{kb}}\right\|}_{2}^{2}\hspace{0.33em}\mathrm{\leq}\hspace{0.33em}{F}{P}_{T}}}{\mathrm{,}}\hspace{1.14em}{b}\mathrm{{{}={}}}{1}{\mathrm{,...,}}\hspace{0.33em}\left|{\mathcal{B}}\right|\hspace{12.23em}\label{inequality_constraint_final_problem_LS_MIMO}
        \end{align}
\end{subequations}
Once again, to avoid excessive notational clutter, we collect the ${y}_{kb\mathrm{,}f}$ values in the matrix $\mathbf{Y}$.

\indent From Theorem \ref{Thm1}, it follows that the optimization problem in (\ref{final_reformulated_problem_LS_MIMO}) is also equivalent to the original optimization problem in (\ref{LS_MIMO_equiv_problem}) in the sense that the optimal objective function and the associated primal optimization variables, $\mathbf{U}$ and $\mathbf{V}$, for both problems are identical.

\indent We emphasize that both of the reformulated problems in (\ref{first_reformulated_problem_LS_MIMO}) and (\ref{final_reformulated_problem_LS_MIMO}) remain nonconvex and NP-hard (like the original problem) since our reformulation steps result in equivalent problems. Crucially, however, the new objective functions are now in a form amenable to an iterative optimization strategy leading to an effective solution to our original optimization problem in (\ref{LS_MIMO_equiv_problem}). 

\noindent \emph{The continuous variables:} To develop the iterative approach, we first observe that for fixed $\mathbf{U}, \mathbf{V}$ and $\mathbf{Y}$, the optimal $\mathbf{\Gamma}$ can be found by setting 
\begin{equation} \label{gamma_kbs_opt}
\frac{\mathrm{\partial}{f}_{q}{\mathrm{(}}{\mathbf{U}}{\mathbf{,}}{\mathbf{V}}{\mathbf{,}}\mathbf{\Gamma}{\mathbf{,}}{\mathbf{Y}}{\mathrm{)}}}{\mathrm{\partial}{\mathit{\gamma}}_{kb\mathrm{,}f}}\mathrm{{{}={}}}{0}\hspace{1.33em}\mathrm{\Rightarrow}\hspace{1.33em}{\mathit{\gamma}}_{kb\mathrm{,}f\mathrm{,}opt}\mathrm{{{}={}}}\frac{{u}_{kb\mathrm{,}f}\mathrm{|}{\mathbf{h}}_{kb\mathrm{,}b\mathrm{,}f}^{H}{\mathbf{v}}_{kb\mathrm{,}f}{\mathrm{|}}^{2}}{\mathop{\sum}\limits_{\mathop{{b}{\mathrm{'}}\mathrm{{{}={}}}{1}}\limits_{{\mathrm{(}}{b}\prime{\mathrm{,}}{k}\prime{\mathrm{)}}\kern0.2em\mathrm{\ne}}}\limits^{\left|{\mathcal{B}}\right|\hspace{0.33em}}{\mathop{\sum}\limits_{\mathop{{k}{\mathrm{'}}\mathrm{{{}={}}}{1}}\limits_{\mathrm{(}b\mathrm{,}k\mathrm{)}}}\limits^{{K}_{b\mathrm{'}}}{{u}_{k\mathrm{'}b\mathrm{',}f}{\mathrm{|}}{\mathbf{h}}_{kb\mathrm{,}b\mathrm{',}f}^{H}{\mathbf{v}}_{k\mathrm{'}b\mathrm{',}f}{\mathrm{|}}^{2}\mathrm{{+}}{\mathit{\sigma}}_{kb\mathrm{,}f}^{2}}}}
\end{equation}
as $
{f}_{r}\left({\mathbf{U}{,}\mathbf{V}{,}\mathbf{\Gamma}}\right)$ is concave in $\mathbf{\Gamma}$. 
Next, we note that holding the $\mathbf{U}$, $\mathbf{V}$ and $\mathbf{\Gamma}$ values fixed, $
{f}_{q}\left({{\mathbf{U}}{\mathbf{,}}{\mathbf{V}}{\mathbf{,}}\mathbf{\Gamma}{\mathbf{,}}{\mathbf{Y}}}\right)
$ is concave in $\mathbf{Y}$, so the optimal $\mathbf{Y}$ values can be found by setting
\begin{equation} \label{y_kbs_opt}
\frac{\mathrm{\partial}{f}_{q}{\mathrm{(}}{\mathbf{U}}{\mathbf{,}}{\mathbf{V}}{\mathbf{,}}\mathbf{\Gamma}{\mathbf{,}}{\mathbf{Y}}{\mathrm{)}}}{\mathrm{\partial}{y}_{kb\mathrm{,}f}^{\mathrm{*}}}\mathrm{{{}={}}}{0}\hspace{1.33em}\mathrm{\Rightarrow}\hspace{1.33em}{y}_{kb\mathrm{,}f\mathrm{,}opt}\mathrm{{{}={}}}\frac{\sqrt{{w}_{kb}{\mathrm{(}}{1}\mathrm{{+}}{\mathit{\gamma}}_{kb\mathrm{,}f}{\mathrm{)}}}{u}_{kb\mathrm{,}f}{\mathbf{v}}_{\mathit{k}\mathit{b}\mathrm{,}\mathit{f}}^{\mathit{H}}{\mathbf{h}}_{\textit{kb}\mathrm{,}\textit{b}\mathrm{,}\textit{f}}}{\mathop{\sum}\limits_{{b}{\mathrm{'}}\mathrm{{{}={}}}{1}}\limits^{\left|{\mathcal{B}}\right|}{\mathop{\sum}\limits_{{k}{\mathrm{'}}\mathrm{{{}={}}}{1}}\limits^{{K}_{b\mathrm{'}}}{{u}_{k\mathrm{'}b\mathrm{',}f}{\mathrm{|}}{\mathbf{h}}_{kb\mathrm{,}b\mathrm{',}f}^{H}{\mathbf{v}}_{k\mathrm{'}b\mathrm{',}f}{\mathrm{|}}^{2}\mathrm{{+}}{\mathit{\sigma}}_{kb\mathrm{,}f}^{2}}}}
\end{equation}
{
In a similar fashion, when  $\mathbf{U}$, $\mathbf{\Gamma}$ and $\mathbf{Y}$ are fixed, we can find the optimal $\mathbf{V}$ values (i.e., the beamforming weight vectors). Note that due to the sum-power constraint (\ref{inequality_constraint_final_problem_LS_MIMO}), taking the derivative of ${f}_{q}\left({{\mathbf{U}}{\mathbf{,}}{\mathbf{V}}{\mathbf{,}}\mathbf{\Gamma}{\mathbf{,}}{\mathbf{Y}}}\right)$ directly with respect to $\mathbf{V}$ to find the optimal beamforming weight vectors is not valid. To simplify the derivation, we recall that with these variables fixed, we can write the problem of finding the optimal beamforming weight vectors as:
\begin{subequations} \label{BF_weights_problem_LS_MIMO}
        \begin{align}
        \mathop{\mathrm{maximize}}\limits_{{\mathbf{V}}}\hspace{2.30em}{f}_{q}{\mathrm{(}}{\mathbf{U}}{\mathbf{,}}{\mathbf{V}}{\mathbf{,}}\mathbf{\Gamma}{\mathbf{,}}{\mathbf{Y}}{\mathrm{)}}\hspace{22.83em}\\
        {\mathrm{subject}}\hspace{0.33em}{\mathrm{to}}\hspace{2.30em}\mathop{\sum}\limits_{{f}\mathrm{{{}={}}}{1}}\limits^{F}{\mathop{\sum}\limits_{{k}\mathrm{{{}={}}}{1}}\limits^{{K}_{b}}{{\left\|{{\mathbf{v}}_{kb}}\right\|}_{2}^{2}\hspace{0.33em}\mathrm{\leq}\hspace{0.33em}{F}{P}_{T}}}{\mathrm{,}}\hspace{1.14em}{b}\mathrm{{{}={}}}{1}{\mathrm{,...,}}\hspace{0.33em}\left|{\mathcal{B}}\right|\hspace{12.23em}\label{BF_weights_constraint}
        \end{align}
\end{subequations}
We note that this optimization problem is concave and thus readily solvable; in fact, we can derive a closed-form expression for the weights by introducing Lagrange multipliers $\mu_{b}$ for the sum-power constraint at each BS in (\ref{BF_weights_constraint}). This yields the Lagrangian $\hat{\mathcal{L}}\left({{\mathbf{V}}{\mathbf{,}}\boldsymbol{{\mu}}}\right)$ (not to be confused with the Lagrangian ${\mathcal{L}}{(}{\mathbf{U}}{\mathbf{,}}{\mathbf{V}}{\mathbf{,}}\mathbf{\Gamma}{\mathbf{,}}\mathbf{\Lambda}{)}$
we derived earlier) as:
\begin{equation}
\hat{\mathcal{L}}\left({{\mathbf{V}}{\mathbf{,}}\boldsymbol{\mu}}\right){=}{f}_{q}{(}{\mathbf{U}}{\mathbf{,}}{\mathbf{V}}{\mathbf{,}}\mathbf{\Gamma}{\mathbf{,}}{\mathbf{Y}}{)}{-}\mathop{\sum}\limits_{{b}{=}{1}}\limits^{\left|{\mathcal{B}}\right|}{{\mathit{\mu}}_{b}\left({\mathop{\sum}\limits_{{f}{=}{1}}\limits^{F}{\mathop{\sum}\limits_{{k}{=}{1}}\limits^{{K}_{b}}{{\left\|{{\mathbf{v}}_{kb,f}}\right\|}_{2}^{2}\hspace{0.33em}{-}{FP}_{T}}}}\right)}
\end{equation}
where, to keep our notation uncluttered, we collect the multipliers $\mu_{b}$ in the vector $\boldsymbol{\mu}$. Then the first-order optimality condition of $\hat{\mathcal{L}}\left({{\mathbf{V}}{\mathbf{,}}\boldsymbol{{\mu}}}\right)$ with respect to each $\mathbf{v}_{kb,f}$ yields:
\begin{multline} \label{v_kbs_opt}
\frac{\partial\hat{\mathcal{L}}\left({{\mathbf{V}}{\mathbf{,}}\boldsymbol{\mu}}\right)}{\partial{\mathbf{v}}_{kb,f}}{=}\mathbf{0}\Rightarrow\\\mathbf{v}_{kb,f,opt}{=}\sqrt{{w}_{kb}{(}{1}{+}{\gamma}_{kb,f}{)}}{u}_{kb,f}{\left({\mathop{\sum}\limits_{{b}{'}{=}{1}}\limits^{B}{\mathop{\sum}\limits_{{k}{=}{1}}\limits^{{K}_{b}}{{u}_{k'b',f}{\left|{{y}_{k'b',f}}\right|}^{2}{\mathbf{h}}_{k'b',b,f}}}{\mathbf{h}}_{k'b',b'f}^{H}{+}{\mu}_{b}{\mathbf{I}}_{M}}\right)}^{{-}{1}}{y}_{kb,f}^{*}{\mathbf{h}}_{kb,b,f}
\end{multline}
The dual variable ${{\mu}_{b}}$ should be chosen to satisfy complementary slackness in the total power constraint at BS $b$; observing (\ref{v_kbs_opt}), it is clear that the magnitude of $\mathbf{v}_{kb,f}$ is a decreasing function of ${{\mu}_{b}}$. Thus, we can obtain ${{\mu}_{b}}$ easily through a bisection search, which in turn can be used to obtain the optimal beamforming weight vectors.
}
At this juncture, we note an important point for future reference: the beamforming step involves inversion of a $M\times{M}$ matrix \textit{for each user}, which is computationally costly, especially when we have to perform it for a large number of users. Furthermore, if the scheduling variable for a user is zero, the beamforming weight for that user is automatically zero; there is no need to perform any computation in this case.

\noindent \emph{The binary scheduling variables:} The final step in the iterative approach is to optimize the user scheduling variables $\mathbf{U}$ when the continuous variables, $\mathbf{\Gamma}$, $\mathbf{V}$ and $\mathbf{Y}$, are held fixed. To do so, we first observe that since the frequency bands are assumed to be orthogonal, the scheduling decisions are decoupled across the different frequency bands. Furthermore, we make use of an intuitive yet powerful insight first suggested in~\cite{stolyar2009self} and also observed in \cite{yu2013multicell}: provided that the beamforming weight vectors $\mathbf{V}$ are held fixed, the interference value experienced by a user in the downlink scheduled on a particular frequency band \textit{depends only on the beam used to serve that user and remains fixed regardless of which other users are scheduled on the remaining beams}. We note that this is different from the uplink setting, in which the interference pattern in the network changes when a new set of users is scheduled on a given set of beams.

The fact that the interference pattern changes in the uplink creates significant challenges in terms of scheduling: as the authors of \cite{shen2018fractional2} observe, even with beamforming weights \textit{fixed}, the problem of optimal scheduling remains NP-hard. This substantially affects the quality of solutions obtained since as emphasized earlier, we do not know which users are suitable to schedule \textit{a priori}. One solution, as mentioned previously, is to schedule all users in the network and transform the original problem to an unconstrained problem in terms of scheduling; however, since we need a matrix inversion per user, this results in an undesirable increase in computational complexity to levels identical with the WMMSE algorithm \cite{shen2018fractional2}. In the downlink, however, we are not bound by this constraint, i.e., changing scheduling decisions does not affect the interference pattern. Thus, we can schedule only a subset of users in the entire network, reducing complexity as only the beamforming weights for a small number of users need to be calculated.

To illustrate this point, let us consider the ${b}^{\mathrm{th}}$ BS in a multicell network. The scheduling decisions for different frequency bands are decoupled; thus, we consider the ${f}^{\mathrm{th}}$ frequency band without loss of generality. In addition to this, each BS can schedule at most $M$ users in a single time slot, whereas it has $K_b\mathrm{>}M$ users associated with it. Suppose the BS is serving users on the indicated frequency band using a fixed set of ${N_{b,f}}\mathrm{\leq}{M}$ nonzero beamforming weights which we denote by ${\check{\mathcal{V}}}_{b,f}$, i.e., \[{\check{\mathcal{V}}}_{b,f}\mathrm{{{}={}}}{\mathrm{\{}}{\mathbf{\check{v}}}_{nb,f}\mathrm{\in}{\mathbb{C}}^{M}{\mathrm{|}}\hspace{0.33em}{n}\mathrm{{{}={}}}{1}{\mathrm{,...,}}{N}_{b,f}{\mathrm{;}}\hspace{0.33em}{\mathbf{\check{v}}}_{nb,f}\mathrm{\ne}{\mathbf{0}}{\mathrm{\}}}\]
\indent It follows that the ${k}^{\mathrm{th}}$ user associated with this BS can find itself in one of two scenarios with regards to the given frequency band: either it is scheduled for transmission on the one of the ${N_{b,f}}$ beams from the set ${\check{\mathcal{V}}}_{b,f}$ or it is not being served by the BS. For the former setting, suppose the user is scheduled on the ${n}^{\mathrm{th}}$ nonzero beam; then the power received on this beam is the signal power. If the user is not scheduled, however, all the power received is interference power. For notational convenience, let us denote by ${\mathit{\zeta}}_{kb\mathrm{,}f}$ the combined received signal, interference and noise power of the user in question, i.e.,
\begin{equation}\label{intf_plus_noise}
{\zeta}_{kb,f}{=}\mathop{\sum}\limits_{{b}{'}{=}{1}}\limits^{\left|{\mathcal{B}}\right|}{\mathop{\sum}\limits_{{k}{'}{=}{1}}\limits^{K}{{u}_{k'b',f}{\left|{{\mathbf{h}}_{kb,b',f}^{\mathrm{H}}{\mathbf{v}}_{k'b',f}}\right|}^{2}{+}{\mathit{\sigma}}_{kb,f}^{2}}}
\end{equation}

Then the total interference power received by this user ${I}_{kb,f}$ is given by 
\[
{I}_{kb,f}{{}={}}\left\{{\begin{array}{c}{{\mathit{\zeta}}_{kb,f}\qquad\qquad\qquad\qquad\qquad{\mathrm{if}}\hspace{0.33em}{\mathrm{the}}\hspace{0.33em}\mathrm{user}\hspace{0.33em}\mathrm{is}\hspace{0.33em}\mathrm{not}\hspace{0.33em}\mathrm{scheduled}\mathrm{,i.e.,}\hspace{0.33em}{u}_{kb,f}{{}={}}{0}{}}\\{{\mathit{\zeta}}_{kb,f}{-}{\left|{{\mathbf{h}}_{kb,b,f}^{H}{\check{\mathbf{v}}}_{nb,f}}\right|}^{2}\hspace{1.73em}\qquad\mathrm{if}\hspace{0.33em}\mathrm{the}\hspace{0.33em}\mathrm{user}\hspace{0.33em}\mathrm{is}\hspace{0.33em}\mathrm{scheduled}\mathrm{,i.e.,}\hspace{0.33em}{u}_{kb,f}{{}={}}{1}{}\hspace{1.76em}}\end{array}}\right.
\]

Importantly, if this user is scheduled on the ${n}^{\mathrm{th}}$ nonzero beam, the interference power it experiences does not depend on which users are scheduled on the remaining ${N}_{b,f}\mathrm{{-}}{1}$ beams within its own cell. The same holds true when the user in question is not scheduled on any beam; the interference power experienced by the user is the same regardless of which set of users in its own cell is scheduled on the given set of beams. In addition, we also make the following critical observation: for the ${b}^{\mathrm{th}}$ BS, changing the set of users scheduled on the fixed set of beams does not change the interference pattern experienced by users \textit{outside} its own cell. {This can be seen from the fact that in (\ref{intf_plus_noise}), the inter-cell interference power received by user $k$ associated with BS $b$ from BS $b'$ on frequency band $f$ depends only on the interference channel $\mathbf{h}_{kb,b',f}$, and the beamforming weight vectors $\mathbf{v}_{k'b',f}$ can be permuted over any of the users served by BS $b'$ without affecting ${\zeta}_{kb,f}$. Meanwhile, the intracell interference power recieved by user $k$ associated with BS $b$ on frequency band $f$ depends only upon the information-bearing channel $\mathbf{h}_{kb,b,f}$ and the beamforming weight the user is scheduled on $\mathbf{v}_{k'b',f}$.  Taken together, these observations imply that if the beamforming weights throughout the network are held fixed, we can \textit{locally} optimize the scheduling at each BS in order to maximize the \textit{network-wide} sum weighted rate \emph{for the given set of beamforming weights}.} 

Accordingly, we can formulate a strategy to help us find the best set of ${N}_{b,f}$ users to be served by the ${b}^{\mathrm{th}}$ BS on its set of nonzero beamforming weights ${\check{\mathcal{V}}}_{b,f}$ for the ${f}^{\mathrm{th}}$ frequency band. In other words, our goal is to find the set of ${N}_{b,f}$ users out of the ${K}_{b}$ total users in the cell that will yield the maximum weighted sum rate on the given set of beamforming weight vectors. Our choice of users should satisfy the constraint that a user can only be served on a single beam by a BS in keeping with our original system model. A greedy strategy of assigning the user capable of achieving the highest weighted rate on each beam is not guaranteed to solve the combinatorial problem of selecting the best subset of ${N}_{b,f}$ users of the $K_b$ available. 

Our first step in matching the users to the fixed beams to maximize the WSR is to define the ${K}_{b}\mathrm{\times}{N}_{b,f}$ combined weighted rates matrix ${\hat{\mathbf{R}}}_{b\mathrm{,}f}$ for the given set of beams and users on the ${f}^{\mathrm{th}}$ frequency band. The $\mathrm{(}i\mathrm{,}j{\mathrm{)}}^{\mathrm{th}}$ entry in this matrix, denoted by ${\widehat{r}}_{ib\mathrm{,}j\mathrm{,}f}$, indicates the weighted rate that would be achieved by the ${i}^{\mathrm{th}}$ user if it is scheduled on the ${j}^{\mathrm{th}}$ nonzero beam on the ${f}^{\mathrm{th}}$ frequency band, i.e.,
\[
{\left[{{\hat{\mathbf{R}}}_{b\mathrm{,}f}}\right]}_{ij}\mathrm{{{}={}}}{\widehat{r}}_{ib\mathrm{,}j\mathrm{,}f}\mathrm{{{}={}}}{w}_{ib}\log\left({{1}\mathrm{{+}}\frac{{\left|{{\mathbf{h}}_{ib\mathrm{,}b\mathrm{,}f}^{H}{\check{\mathbf{v}}}_{jb\mathrm{,}f}}\right|}^{2}}{{\mathit{\zeta}}_{ib\mathrm{,}f}\mathrm{{-}}{\left|{{\mathbf{h}}_{ib\mathrm{,}b\mathrm{,}f}^{H}{\check{\mathbf{v}}}_{jb\mathrm{,}f}}\right|}^{2}}}\right)
\]
Note that we compute the total interference received by every user in the network, regardless of whether it is scheduled or not; thus, every user is considered eligible for possible scheduling on a non-zero beamforming weight.

It follows that our goal of scheduling the users on the appropriate beams can be formulated as the following binary integer optimization problem:
\begin{equation} \label{LSAP_single_freq}
\begin{gathered}
{\mathop{\mathrm{maximize}}\limits_{\mathbf{X}}\hspace{0.33em}\hspace{0.33em}\hspace{0.33em}\hspace{0.53em}\hspace{0.33em}\hspace{0.33em}\mathop{\sum}\limits_{{k}\mathrm{{{}={}}}{1}}\limits^{{K}_{b}}{\mathop{\sum}\limits_{{n}\mathrm{{{}={}}}{1}}\limits^{{N}_{b\mathrm{,}f}}{{\widehat{r}}_{kb\mathrm{,}n\mathrm{,}f}}}{x}_{kb\mathrm{,}n\mathrm{,}f}}\hspace{12.65em}\\
{{\mathrm{subject}}\hspace{0.33em}{\mathrm{to}}\hspace{0.20em}\hspace{0.33em}\hspace{0.33em}\hspace{0.33em}\hspace{0.33em}\hspace{0.33em}\mathop{\sum}\limits_{{k}\mathrm{{{}={}}}{1}}\limits^{{K}_{b}}{{x}_{kb\mathrm{,}n\mathrm{,}f}}\hspace{0.33em}\mathrm{{{}={}}}\hspace{0.33em}{1}{\mathrm{,}}\hspace{2.52em}{n}\mathrm{{{}={}}}{1}{\mathrm{,...,}}{N}_{b\mathrm{,}f}}\hspace{5.95em}\\
{\mathop{\sum}\limits_{{n}\mathrm{{{}={}}}{1}}\limits^{{N}_{b\mathrm{,}f}}{{x}_{kb\mathrm{,}n\mathrm{,}f}\hspace{0.33em}\mathrm{\leq}\hspace{0.33em}{1}}{\mathrm{,}}\hspace{1.00em}\hspace{0.33em}\hspace{0.33em}\hspace{0.33em}\hspace{0.33em}\hspace{0.33em}\hspace{0.33em}{k}\mathrm{{{}={}}}{1}{\mathrm{,...,}}{K}_{b}}\hspace{0.30em}\\
{{x}_{kb\mathrm{,}n\mathrm{,}f}\mathrm{\in}{\mathrm{\{}}{0}{\mathrm{,}}{1}{\mathrm{\},}}\hspace{1.95em}\hspace{0.33em}\hspace{0.33em}\hspace{0.33em}\hspace{0.33em}{n}\mathrm{{{}={}}}{1}{\mathrm{,...,}}{N}_{b\mathrm{,}f}{\mathrm{;}}\hspace{0.33em}{k}\mathrm{{{}={}}}{1}{\mathrm{,...,}}{K}_{b}\hspace{-6.10em}}
\end{gathered}
\end{equation}
where the binary variables ${x}_{kb\mathrm{,}n\mathrm{,}f}$ indicate whether or not the ${k}^{\mathrm{th}}$ user is scheduled on the ${n}^{\mathrm{th}}$ nonzero beam by the ${b}^{\mathrm{th}}$ BS on the ${f}^{\mathrm{th}}$ frequency band. The objective function maximizes the WSR. {The first constraint requires at least one user to be scheduled on every beam while the second ensures that a user is scheduled in one beam only.} 

Note that these binary variables are not the same as the optimization variables $u_{kb,f}$ which refer to whether the user is scheduled or not. These variables have the additional index $n$ which denotes the $n^\mathrm{th}$ beam. Specifically, $u_{kb,f} = \sum_{n=1}^{N_{b,f}} x_{kb,n,f}$, i.e., if the user is scheduled on any one of the beams associated with the BS, it is scheduled by that BS.

The problem in~\eqref{LSAP_single_freq} is, in fact, a linear sum assignment problem, and can also be viewed as a maximum weighted bipartite matching problem, which has been extensively studied in the literature and can be solved in polynomial time using techniques like the Hungarian algorithm~\cite{grotschel2012geometric} (more formally known as the Kuhn-Munkres algorithm) or the auction algorithm~\cite{bertsekas1990auction}. Specifically, using the Hungarian algorithm to solve the linear sum assignment problem for an ${m}\mathrm{\times}{n}$ matrix, where $m\mathrm{>}n$, has a complexity of $\mathcal{O}\mathrm{(}{n}^{2}m\mathrm{)}$~\cite{grotschel2012geometric}. In our case, we have ${N}_{b,f}\mathrm{<}{K}_{b}$; hence the complexity of solving problem (\ref{LSAP_single_freq}) is $\mathcal{O}\mathrm{(}{N}_{b,f}^{2}{K}_{b}\mathrm{)}$. Solving this optimization problem for each BS allows us to \emph{optimally} schedule the users to maximize the network weighted sum rate on the fixed set of beamforming weight vectors. We remark that this scheduling scheme using fractional programming and Hungarian algorithm is different from the uplink setting \cite{shen2018fractional2}, where the scheduling of one user would have changed the interference pattern; consequently the only way to solve the uplink problem to global optimality is by extensive search as mentioned earlier.

This scheduling setup also reduces complexity as compared to the unconstrained scheduling setting, as we now only have to calculate the beamforming weight vectors for a maximum of $M$ rather than $K_b$ users at each iteration of the algorithm. Importantly, this set of $M$ users can change from iteration to iteration as the beamforming weights are matched to the best set of users; this is unlike the uplink setting where a user not scheduled during the initialization remains unscheduled throughout all subsequent iterations of fractional programming algorithm \cite{shen2018fractional2}. 

{With a fixed set of beamforming weight vectors, therefore, the proposed scheduling scheme finds a per-cell optimal selection of scheduling decisions; hence, from iteration to iteration, the user assignment to beamforming weight vectors is set according to which combination yields the greatest network WSR. Here, we only consider scheduling each user on a single beam per frequency band for two reasons: first, this is in keeping with our original system model in which each user is scheduled on a maximum of one data stream per frequency band and the standard assumption for coordinated resource allocation algorithms including multicell WMMSE \cite{shi2011iteratively} and uplink fractional programming \cite{shen2018fractional2}, and second, the framework of the Hungarian algorithm does not allow for scheduling on more than a single beamforming weight vector.}

Combining all these steps together, the proposed technique for coordinated resource allocation in the downlink of multiuser MISO networks is summarized in Algorithm 1. The algorithm optimizes one of the optimization variables keeping the others fixed, iterating till convergence.

\begin{algorithm}
        \caption{Coordinated Resource Allocation for Downlink of LS-MIMO Networks}\label{euclid}
        \begin{algorithmic}[1]
                \State ${\mathrm{Initialize}}\hspace{0.33em}{\mathbf{U}}{\mathrm{,}}\hspace{0.33em}{\mathbf{V}}{\mathrm{,}}\hspace{0.33em}{N}_{iterations}$
                \State{$\mathrm{Set\hspace{0.33em}\mathit{i}=1}$}
                \State {\textbf{repeat}}
                \State{$\hspace{1.45em}{\mathrm{Update}}\hspace{0.33em}\mathbf{\Gamma}\hspace{0.33em}{\mathrm{using}}\hspace{0.33em}(\ref{gamma_kbs_opt})$}
                \State{$\hspace{1.45em}{\mathrm{Update}}\hspace{0.33em}\mathbf{Y}\hspace{0.33em}{\mathrm{using}}\hspace{0.33em}(\ref{y_kbs_opt})$}
                \State{$\hspace{1.45em}{\mathrm{Update}}\hspace{0.33em}\mathbf{V}\hspace{0.33em}{\mathrm{using}}\hspace{0.33em}(\ref{v_kbs_opt})$}
                \State{$\hspace{1.45em}{\mathrm{Update}}\hspace{0.33em}{\mathbf{U}}\hspace{0.33em}{\mathrm{and}}\hspace{0.33em}{\mathbf{V}}\hspace{0.33em}{\mathrm{jointly}}\hspace{0.33em}{\mathrm{by}}\hspace{0.33em}{\mathrm{solving}}\hspace{0.33em}(\ref{LSAP_single_freq})\hspace{0.33em}{\mathrm{for}}\hspace{0.33em}{\mathrm{each}}\hspace{0.33em}{\mathrm{base}}\hspace{0.33em}{\mathrm{station}}\hspace{0.33em}{\mathrm{on}}\hspace{0.33em}{\mathrm{each}}\hspace{0.33em}{\mathrm{frequency}}\hspace{0.33em}{\mathrm{band}}$}
                \State{\hspace{1.30em} $\mathrm{Increment\hspace{0.33em}\mathit{i}}$}
                \State {${\mathbf{until}}\hspace{0.33em}{\mathrm{convergence}}\hspace{0.33em}{\mathrm{or}}\hspace{0.33em}{i}\mathrm{{{}={}}}{N}_{iterations}$}
        \end{algorithmic}
\end{algorithm}
\newpage
\begin{theorem}
        The proposed algorithm described in Algorithm 1 is non-decreasing in the objective function ${f}_{0}\mathrm{(}\mathbf{U}\mathrm{,}\mathbf{V}\mathrm{)}$ after each iteration.
\end{theorem}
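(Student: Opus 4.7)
The plan is a block coordinate ascent argument, but the twist is that the algorithm ascends on the surrogate $f_{q}$ for three of its blocks and on the original objective $f_{0}$ (the WSR) for the scheduling block, so the core task is to thread the inequality $f_{0}(\mathbf{U},\mathbf{V}) \;\geq\; f_{0}(\mathbf{U}_{\mathrm{old}},\mathbf{V}_{\mathrm{old}})$ through a chain of mixed $f_{0}/f_{q}$ comparisons. First I would record the key identity that falls out of the Lagrangian reformulation ending at (\ref{f_r_single_frequency}) together with Theorem~\ref{Thm1}: for every feasible $(\mathbf{U},\mathbf{V})$,
\[
f_{0}(\mathbf{U},\mathbf{V}) \;=\; \max_{\boldsymbol{\Gamma},\mathbf{Y}} \, f_{q}(\mathbf{U},\mathbf{V},\boldsymbol{\Gamma},\mathbf{Y}),
\]
and the maximizing pair $(\boldsymbol{\Gamma}^{\star},\mathbf{Y}^{\star})$ is given in closed form by (\ref{gamma_kbs_opt}) and (\ref{y_kbs_opt}). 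This is the bridge between what we want to track and what the algorithm actually updates.

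Next I would handle the three continuous-variable updates (lines 4--6 of Algorithm~1) as standard block-coordinate ascent on $f_{q}$. The $\boldsymbol{\Gamma}$-update is the unconstrained maximizer of $f_{q}$ in $\boldsymbol{\Gamma}$ because $f_{q}$ is concave in each $\gamma_{kb,f}$; the $\mathbf{Y}$-update is likewise the unconstrained maximizer because $f_{q}$ is concave in each $y_{kb,f}$ (see the derivation in the proof of Theorem~\ref{Thm1}); the $\mathbf{V}$-update in (\ref{v_kbs_opt}) solves the concave, power-constrained subproblem (\ref{BF_weights_problem_LS_MIMO}) exactly via its KKT conditions with $\mu_{b}$ chosen by bisection to enforce the total-power constraint. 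Hence $f_{q}$ is non-decreasing across lines 4--6. To pass from $f_{q}$ back to $f_{0}$, note that after the $\boldsymbol{\Gamma}$ and $\mathbf{Y}$ updates we have $f_{q}(\mathbf{U},\mathbf{V},\boldsymbol{\Gamma}^{\star},\mathbf{Y}^{\star}) = f_{0}(\mathbf{U},\mathbf{V})$. After the subsequent $\mathbf{V}$-update,
\[
f_{0}(\mathbf{U},\mathbf{V}_{\mathrm{new}}) \;\geq\; f_{q}(\mathbf{U},\mathbf{V}_{\mathrm{new}},\boldsymbol{\Gamma}^{\star},\mathbf{Y}^{\star}) \;\geq\; f_{q}(\mathbf{U},\mathbf{V}_{\mathrm{old}},\boldsymbol{\Gamma}^{\star},\mathbf{Y}^{\star}) \;=\; f_{0}(\mathbf{U},\mathbf{V}_{\mathrm{old}}),
\]
where the first inequality uses the identity above and the second uses optimality of $\mathbf{V}_{\mathrm{new}}$ for $f_{q}$ with the auxiliary variables frozen.

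For the scheduling step (line 7) I would argue directly on $f_{0}$, exploiting the downlink fixed-interference-pattern observation made immediately before (\ref{LSAP_single_freq}). With the set of nonzero beamforming vectors $\check{\mathcal{V}}_{b,f}$ held fixed at every base-station, the inter-cell interference term in $\zeta_{kb,f}$ is invariant under any permutation of which in-cell user is assigned to which beam; therefore the network-wide WSR decomposes across cells into the per-BS, per-band assignment objectives written in (\ref{LSAP_single_freq}), and the Hungarian algorithm returns the global optimum of each. The value of the assignment problem at the incumbent pairing is a feasible point of (\ref{LSAP_single_freq}), so its optimum is at least as large, yielding $f_{0}(\mathbf{U}_{\mathrm{new}},\mathbf{V}_{\mathrm{new}}) \geq f_{0}(\mathbf{U}_{\mathrm{old}},\mathbf{V}_{\mathrm{old}})$ for this step as well. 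Chaining the continuous and combinatorial inequalities completes one iteration and hence the proof.

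I expect the main obstacle, and the place where a careful reader will push back, to be this last step: one must justify why a per-cell optimization of an assignment objective is compatible with non-decrease of the network-wide $f_{0}$. The justification rests entirely on the downlink interference invariance, so I would state it explicitly and tie it to the expression (\ref{intf_plus_noise}) before invoking optimality of the Hungarian algorithm. The continuous-block arguments, by contrast, are routine once the identity $f_{0} = \max_{\boldsymbol{\Gamma},\mathbf{Y}} f_{q}$ is in hand.
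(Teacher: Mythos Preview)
Your proposal is correct and follows essentially the same block-coordinate-ascent argument as the paper: both chain $f_{0}$ through the surrogates $f_{r}$ and $f_{q}$, invoke concavity for the continuous updates, and use the downlink interference-invariance to justify the Hungarian reassignment. The only organizational difference is that the paper keeps the entire chain inside $f_{q}$, asserting in step~(\ref{proof_line_6}) that the reassignment maximizes $f_{q}$ and then recovering $f_{0}$ via the equalities (\ref{proof_line_7})--(\ref{proof_line_8}), whereas you return to $f_{0}$ via the minorization identity $f_{0}=\max_{\boldsymbol{\Gamma},\mathbf{Y}}f_{q}$ \emph{before} the scheduling step and argue that step directly on $f_{0}$. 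Your route is slightly cleaner on this point: the paper's closing equalities tacitly require $\boldsymbol{\Gamma}^{(i+1)}$ and $\mathbf{Y}^{(i+1)}$ to remain optimal for the permuted $(\mathbf{U}^{(i+1)},\tilde{\mathbf{V}}^{(i+1)})$, which is not separately justified there, while your argument bypasses that issue entirely.
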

\textit{Proof:} We refer to the objective function in~\eqref{aux_objective_LS_MIMO} as ${f}_{0}\mathrm{(}\mathbf{U}\mathrm{,}\mathbf{V}\mathrm{)}$. The non-decreasing convergence can be proven by considering the following chain of reasoning going from iteration $i$ to $i+1$:
\begin{subequations}
        \begin{align}
        {f}_{0}{\mathrm{(}}{\mathbf{U}}^{\mathrm{(}i\mathrm{)}}{\mathrm{,}}{\mathbf{V}}^{\mathrm{(}i\mathrm{)}}{\mathrm{)}}\mathrm{{{}={}}}{f}_{r}{\mathrm{(}}{\mathbf{U}}^{\mathrm{(}i\mathrm{)}}{\mathrm{,}}{\mathbf{V}}^{\mathrm{(}i\mathrm{)}}{\mathrm{,}}{\mathbf{\Gamma}}^{\mathrm{(}i\mathrm{)}}{\mathrm{)}}\label{proof_line_1}\hspace{4.60em}\\
        \leq{f}_{r}({{\mathbf{U}}^{\left({i}\right)},{\mathbf{V}}^{\left({i}\right)},{\mathbf{\Gamma}}^{\left({{i}{+}{1}}\right)}})\label{proof_line_2}\hspace{3.4em}\\
        {{}={}}{f}_{q}({{\mathbf{U}}^{\left({i}\right)},{\mathbf{V}}^{\left({i}\right)},{\mathbf{\Gamma}}^{\left({{i}{+}{1}}\right)},{\mathbf{Y}}^{\left({i}\right)}})\label{proof_line_3}\hspace{1.25em}\\
        \mathrm{\leq}{f}_{q}{\mathrm{(}}{\mathbf{U}}^{\mathrm{(}i\mathrm{)}}{\mathrm{,}}{\mathbf{V}}^{\mathrm{(}i\mathrm{)}}{\mathrm{,}}{\mathbf{\Gamma}}^{{\mathrm{(}}{i}\mathrm{{+}}{1}{\mathrm{)}}}{\mathrm{,}}{\mathbf{Y}}^{{\mathrm{(}}{i}\mathrm{{+}}{1}{\mathrm{)}}}{\mathrm{)}}\hspace{1.15em}\label{proof_line_4}\\    
        \mathrm{\leq}{f}_{q}{\mathrm{(}}{\mathbf{U}}^{\mathrm{(}i\mathrm{)}}{\mathrm{,}}{\mathbf{V}}^{{\mathrm{(}}{i}\mathrm{{+}}{1}{\mathrm{)}}}{\mathrm{,}}{\mathbf{\Gamma}}^{{\mathrm{(}}{i}\mathrm{{+}}{1}{\mathrm{)}}}{\mathrm{,}}{\mathbf{Y}}^{{\mathrm{(}}{i}\mathrm{{+}}{1}{\mathrm{)}}}{\mathrm{)}}\hspace{0.18em}\label{proof_line_5}\\
        \mathrm{\leq}{f}_{q}{\mathrm{(}}{\mathbf{U}}^{{\mathrm{(}}{i}\mathrm{{+}}{1}{\mathrm{)}}}{\mathrm{,}}{\tilde{\mathbf{V}}}^{{\mathrm{(}}{i}\mathrm{{+}}{1}{\mathrm{)}}}{\mathrm{,}}{\mathbf{\Gamma}}^{{\mathrm{(}}{i}\mathrm{{+}}{1}{\mathrm{)}}}{\mathrm{,}}{\mathbf{Y}}^{{\mathrm{(}}{i}\mathrm{{+}}{1}{\mathrm{)}}}{\mathrm{)}}\hspace{-0.70em}\label{proof_line_6}\\
        {{}={}}{f}_{r}\left({{\mathbf{U}}^{\left({{i}{+}{1}}\right)},{\tilde{\mathbf{V}}}^{\left({{i}{+}{1}}\right)},{\mathbf{\Gamma}}^{\left({{i}{+}{1}}\right)}}\right)\hspace{0.95em}\label{proof_line_7}\\
        \mathrm{{{}={}}}{f}_{0}{\mathrm{(}}{\mathbf{U}}^{{\mathrm{(}}{i}\mathrm{{+}}{1}{\mathrm{)}}}{\mathrm{,}}{\tilde{\mathbf{V}}}^{{\mathrm{(}}{i}\mathrm{{+}}{1}{\mathrm{)}}}{\mathrm{)}}\label{proof_line_8}\hspace{4.50em}
        \end{align}
\end{subequations}
where (\ref{proof_line_1}) follows from the fact that the reformulated objective function $f_r$ equals the original when the optimal $\mathbf{\Gamma}$
values are substituted; (\ref{proof_line_2}) follows from the fact that the update of $\mathbf{\Gamma}$ when all other variables are fixed maximizes $f_r$; (\ref{proof_line_3}) follows from Theorem 1; (\ref{proof_line_4}) follows from the fact that the update of $\mathbf{Y}$ when all other variables are fixed maximizes $f_q$; (\ref{proof_line_5}) follows from the fact that the update of $\mathbf{V}$ when all other variables are fixed maximizes $f_q$; (\ref{proof_line_6}) follows from the fact that the joint update of $\mathbf{U}$ and $\mathbf{V}$ using (\ref{LSAP_single_freq}) maximizes $f_q$ when all other variables are fixed; (\ref{proof_line_7}) follows from Theorem 1; and (\ref{proof_line_8}) from similar reasoning to (\ref{proof_line_1}). Note that we use $\tilde{\mathbf{V}}$ to denote the set of permuted beamforming weights obtained from $\mathbf{V}$ by solving (\ref{LSAP_single_freq}). $\blacksquare$

Coupled with the fact that the objective function has a finite maximum, we can state that the algorithm converges. However, since the scheduling variables are binary, we cannot call this a local optimum. Furthermore, we observe that the proposed scheme is not exactly a block coordinate ascent scheme, since we use the partial derivative of $f_r$ in (\ref{gamma_kbs_opt}); nonetheless, the algorithm converges in a non-decreasing fashion to an effective solution of the original WSR maximization problem as described above.

\section{Performance Evaluation of Proposed Scheme} \label{ss results single freq}

In order to evaluate the performance of the proposed algorithm, we compare it with the following different coordinated and uncoordinated resource allocation schemes:
\begin{enumerate}
        \item \textit{Matched filtering transmission with equal power allocation and round-robin scheduling}: This is the simplest uncoordinated resource allocation strategy which can be implemented and as such it provides a useful benchmark with which to compare the performance of the multiuser algorithm.
        \item \textit{Zero-forcing with equal power allocation and round-robin scheduling}: Zero-forcing eliminates intracell interference and thus provides improved performance compared to matched filtering. Zero-forcing does involve increased computational complexity compared to matched filtering, requiring a matrix inversion to determine the beamforming weight for each of the scheduled users. 
        \item \textit{WMMSE with greedy scheduling}: The WMMSE algorithm has been well studied in the literature as a coordinated beamforming scheme; adaptive power allocation is implicitly included in the beamformer design. However, as noted in~\cite{shen2018fractional2}, WMMSE is intended for use as a beamforming algorithm; the question of which users to schedule remains to be answered. Accordingly, we adopt the greedy proportionally fair scheduling scheme introduced in~\cite{yu2013multicell}. 
        
        In this scheduling scheme, we first initialize the algorithm with a random set of $M$ users. The beamforming weights are held fixed and we sequentially determine which user will maximize the weighted rate on each beamforming weight from the BS, i.e., the ${k}^{\mathrm{th}}$ user associated with the ${b}^{\mathrm{th}}$ BS is scheduled on the ${j}^{\mathrm{th}}$ beam on the ${f}^{\mathrm{th}}$ frequency band if 
        \[
        {k}\mathrm{{{}={}}}\arg \hspace{-0.50em}\mathop{\max}\limits_{{i}\mathrm{{{}={}}}{1}{\mathrm{,...}}{K}}\hspace{0.33em}{\hat{r}}_{ib\mathrm{,}j\mathrm{,}f}
        \] This approach to scheduling is distinct from solving the linear sum assignment problem in the proposed algorithm, as the users are selected greedily for each beam rather than jointly across the set of all given beams.

        Importantly, this algorithm is not necessarily monotonically non-decreasing.
        
        \item \textit{Multicell WMMSE}: In multicell WMMSE~\cite{shi2011iteratively}, each BS initializes the algorithm by simultaneously scheduling \textit{all the users} in the network. The algorithm iterates on the beamformer design for all these users; eventually the beamforming weights for the majority of users will converge to zero, and these users are then implicitly not scheduled by the BS. This multicell WMMSE scheme is the state-of-the-art in the literature and has the same convergence properties as our algorithm. However, this comes at a cost: in order to determine the beamforming weights for each user, WMMSE performs a matrix inversion and bisection search. With the multicell WMMSE scheme, since all the users in the network are scheduled, the number of matrix inversions becomes extremely high. This is especially inefficient as the number of users ultimately assigned beamforming weights with nonzero power is very small. As we will show in the analysis of the results, our proposed algorithm is capable of outperforming multicell WMMSE, while simultaneously providing significant savings in computational complexity.
\end{enumerate}
{We consider a network partitioned into identical hexagonal cells, with BSs located at the center of each cell. The users are distributed randomly with uniform density over the entire network area. Furthermore, we also assume that the number of users associated with each BS significantly exceeds the number of transmit antennas available at the BS (i.e., ${K}_{b}\mathrm{\gg}{M}$ for all $b$).} 

To compare the performance of the aforementioned resource allocation schemes, we simulate a 7-cell network with wraparound. To ensure a fair comparison, all iterative optimization schemes were run for 15 iterations. The rest of the simulation parameters are as listed in Table \ref{tab1}.
\begin{table}[b!]
        \centering
        \caption{The numerical values of parameters used in the system model.} \label{tab1}
        \begin{tabular}{ |c|c| } 
                \hline
                Total bandwidth & \textit{W} = 20 MHz \\  \hline
                BS maximum average transmit power per frequency band & ${P}_{T}$ = 43 dBm\\ \hline
                Noise figure & ${N}_{f}$ = 9 dB \\ \hline
                Path-loss exponent &  $\mathit{\alpha}\hspace{0.33em}\mathrm{{{}={}}}\hspace{0.33em}{3}{\mathrm{.}}{\mathrm{76}}$\\ \hline
                Reference distance & 0.3920 m \\ \hline
        \end{tabular}
\end{table}

{We begin by comparing the performance of the proposed algorithm against the benchmark schemes listed, as well as against standard interior-point and sequential quadratic programming algorithms utilized in the literature \cite{nocedal2006numerical,fletcher1987practical}. The latter were implemented using standard available optimization software; all users were scheduled in a similar fashion to the multicell WMMSE algorithm. Figure \ref{WSR_convergence_M2} shows the convergence of network sum-rate for \textit{M}=2 transmit antennas and $K_b$=5 users per cell.
\begin{figure}[!t] 
        \begin{center} 
                \setlength\belowcaptionskip{-3.7\baselineskip}
                \includegraphics[trim={02cm 06cm 2cm 7cm},clip, height=0.55\textwidth]{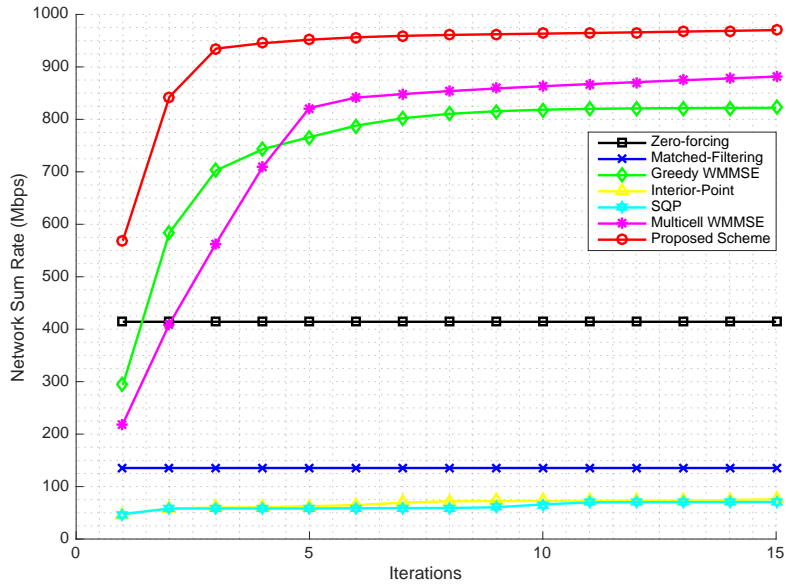}
                \caption{{Convergence of Network Sum Rate for Different Resource Allocation Schemes for $M$=2, $K_b$=5.}}
                \centering
                \centering
                \label{WSR_convergence_M2}
        \end{center}
\end{figure} 
As we can observe, the proposed algorithm achieves a higher network sum-rate, converging smoothly in a monotonically non-decreasing fashion; this is as expected from Theorem 2. At the same time, it is also clear that the uncoordinated resource allocation strategies perform substantially worse than the coordinated strategies, with matched filtering being the worst beamforming strategy. In addition, we observe that there is a gap in performance between greedy and multicell WMMSE. This is readily understood since the greedy scheduling approach is \textit{not} guaranteed to increase the network WSR after the scheduling reassignment. The multicell WMMSE algorithm is guaranteed to converge in a monotonically non-decreasing fashion, and as such provides good performance. Nevertheless, it is outperformed by the proposed scheme, which converges to the highest network weighted sum rate of all resource allocation schemes. In contrast, the sequential quadratic programming and interior-point algorithms show improving objective values as the number of iterations increase; however, they provide the worst performance. Due to the highly non-convex nature of the WSR-max optimization problem, these methods have demonstrated inferior performance in prior works in this area \cite{chitti2013joint,shen2018fractional2}; hence these results are not unexpected.}
\begin{figure}[!t] 
        \begin{center} 
                \setlength\belowcaptionskip{-3.7\baselineskip}
                \includegraphics[trim={5cm 3cm 5cm 4cm},clip, height=0.55\textwidth]{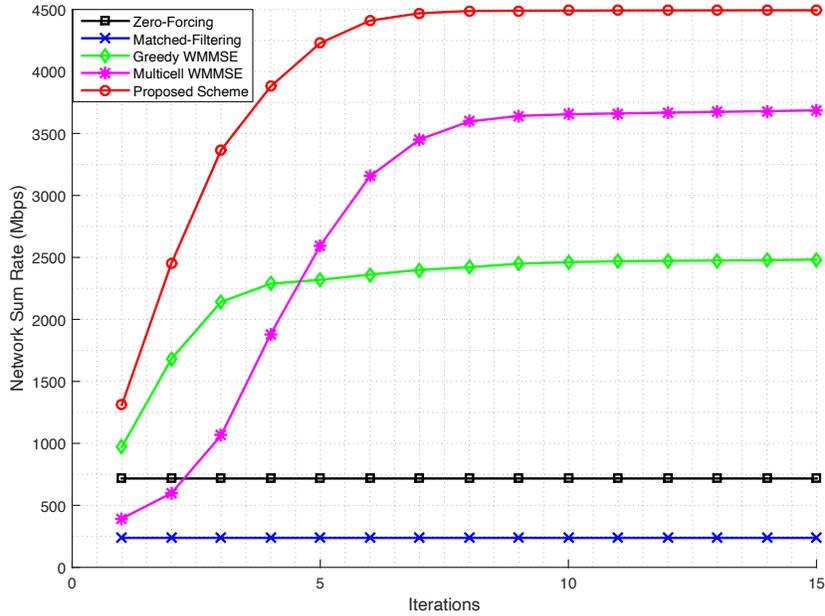}
                \centering
                \caption{Convergence of Network Sum Rate for Different Resource Allocation Schemes.}
                \centering
                \label{WSR_convergence}
        \end{center}
\end{figure} 

\emph{Convergence:} Figure \ref{WSR_convergence} shows the sum-rate convergence of the various resource allocation schemes for a single time slot with identical channel sets but with a much larger network size of $M=8$ and $K_b=80$. We observe similar trends to those in Figure \ref{WSR_convergence_M2}, with two notable differences. First, the greedy WMMSE algorithm is substantially outperformed by the multicell WMMSE algorithm, and the performance gap between the proposed scheme and the benchmarks grows larger as well. Secondly, due to the large number of optimization variables involved, the SQP and interior-point algorithms failed to converge for this setting. In particular, for the SQP approach, taking a direct step involves computing the Hessian of the optimization variables, which is extremely computationally demanding for a problem of this given size \cite{nocedal2006numerical}.

It is important to emphasize that none of the schemes result in the globally optimal solution and, if the number of potential users to be scheduled is very large, the WMMSE algorithm can get stuck in a poor solution and often takes longer to converge. In contrast, the proposed scheme restricts the BS to serve at most $M$ users, thereby narrowing the pool of potential users and ensuring faster convergence to a higher-quality local optimum.
\begin{figure}[!t] 
        \begin{center} 
                \setlength\belowcaptionskip{-3.7\baselineskip}
                \includegraphics[trim={0cm 5cm 0cm 5cm},clip, height=0.65\textwidth]{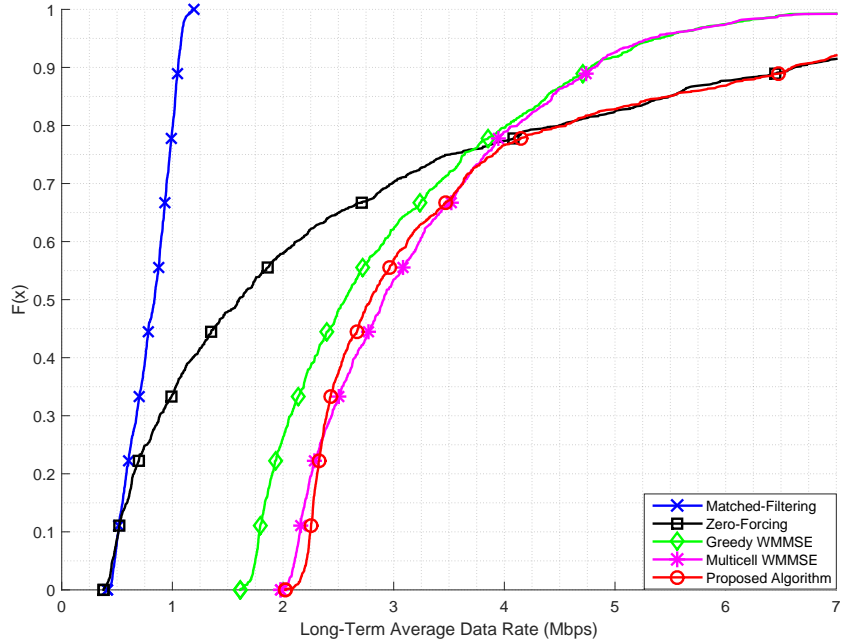}
                \centering
                \caption{CDFs of average data rates achieved with different resource allocation schemes for a fixed number of iterations.}
                \centering
                \centering
                \label{LS_MIMO_CDF}
        \end{center}
\end{figure}

\emph{PF Rates:} In Figure 3, we present the cumulative distribution functions (CDFs) of the long-term user average data rates achieved with the different resource allocation schemes (with $N_{iterations}=15$, $M=8$ and $K_b=80$). In order to compare these, we consider two metrics: the sum of the logarithm of the long-term average data rates (in megabits per second) and the ${\mathrm{10}}^{\mathrm{th}}$ percentile user rates, which are logged in Table \ref{LS_MIMO_sumlog_edge_table}. We choose to maximize the WSR in each time slot when the weights are chosen according to the proportionally fair metric described earlier (with the forgetting factor $\alpha$ chosen as 0.05), as this leads to maximization of the sum of the logarithm of the average data rates achieved by the users. Thus, comparing the average sum-log utility of the different resource allocation schemes allows us to directly compare them in terms of our original objective. We note that comparing the absolute difference (rather than the relative gain) in the sum-log-utility of different schemes illustrates the improvements made to the average rates achieved by the users. 

Comparing the ${\mathrm{10}}^{\mathrm{th}}$ percentile user rates allows us to compare the quality of service for the cell-edge users for the different resource allocation schemes. It is important to note, however, that the algorithms do not optimize the cell-edge rate; comparing edge user rates merely allows us to understand the quality of service that these different resource allocation schemes provide to the lower-percentile users in the network. 

\indent As we can observe from Figure 3 and Table II, the uncoordinated resource allocation schemes have the worst performance in terms of both the average sum-log-utility and edge user rates. This is unsurprising, since the benefits of coordinated resource allocation schemes over uncoordinated schemes are well-known. We note that employing zero-forcing results in significantly higher average sum log utility than matched filtering; this is also to be expected, since the former scheme eliminates intracell interference. 
All three coordinated resource allocation schemes achieve significantly higher performance than the uncoordinated schemes. However, of the two WMMSE resource allocation schemes, greedy scheduling has the worst performance; this is because greedy scheduling is sub-optimal and the associated algorithm is not guaranteed to be monotonic.

Utilizing multicell WMMSE results in a further significant gain to both the average sum-log-utility and the edge user rates. The proposed scheme performs even better than the multicell WMMSE scheme with considerably higher sum-log-utility and slightly better edge rates. The majority of the performance gain comes at higher percentiles, where the proposed approach achieves much better data rates than multicell WMMSE. Indeed, compared to the uncoordinated resource allocation schemes, there is a fourfold increase in the $10^{th}$ percentiles, with an increase of almost 30\% compared to the greedy WMMSE scheme. The sum-log-utility of the proposed scheme is also considerably higher than that achieved by the greedy WMMSE scheme. 
\begin{table}[!t]
        \centering
        \caption{Network sum-log utility and edge user rates for different resource allocation schemes with fixed number of iterations.}
        \label{LS_MIMO_sumlog_edge_table}
        \begin{tabular}{|
                        >{\columncolor[HTML]{EFEFEF}}l |
                        >{\columncolor[HTML]{FFFFFF}}c |
                        >{\columncolor[HTML]{FFFFFF}}c |}
                \hline
                \multicolumn{1}{|c|}{\cellcolor[HTML]{EFEFEF}\textbf{\begin{tabular}[c]{@{}c@{}}Resource Allocation \\ Strategy\end{tabular}}} & \cellcolor[HTML]{EFEFEF}\textbf{\begin{tabular}[c]{@{}c@{}}Average Network \\ Sum-Log Utility\end{tabular}} & \cellcolor[HTML]{EFEFEF}\textbf{\begin{tabular}[c]{@{}c@{}}Edge User Data Rate\\ (Mbps)\end{tabular}} \\ \hline
                Matched Filtering & -203 & 0.51 \\ \hline
                Zero-Forcing & 446 & 0.51 \\ \hline
                Greedy WMMSE & 821 & 1.78 \\ \hline
                Multicell WMMSE & 908 & 2.15 \\ \hline
                Proposed Algorithm & 952 & 2.25 \\ \hline
        \end{tabular}
\end{table}

A key point related to Table~\ref{LS_MIMO_sumlog_edge_table} is that we initialize the proposed algorithm by scheduling the set of users that achieves the highest interference-free weighted sum rate with an equal power allocation. For a worst-case initialization (i.e., if we start with the set of users that achieves the lowest interference-free weighted sum rate), the sum-log utility function is 909 for the proposed algorithm, still higher than that achieved by the multicell WMMSE algorithm. We emphasize that there is no known optimal initialization for the coordinated resource allocation schemes.  

\emph{Sum Rate:} To change the optimization objective function, we compare the performance of the aforementioned resource allocation schemes in terms of network sum-rate when the BS maximum transmit power, $P_T$ is varied from 20 to 70 dBm for $M=8$ and $K_b=80$. When maximizing the sum-rate, all users' weights are set to unity and this assignment does  not change across time-slots. The corresponding results are shown in Figure \ref{sumrates}.
\begin{figure}[!t] 
        \begin{center} 
                \setlength\belowcaptionskip{-3.7\baselineskip}
                \includegraphics[trim={01cm 06cm 1cm 7cm},clip, height=0.55\textwidth]{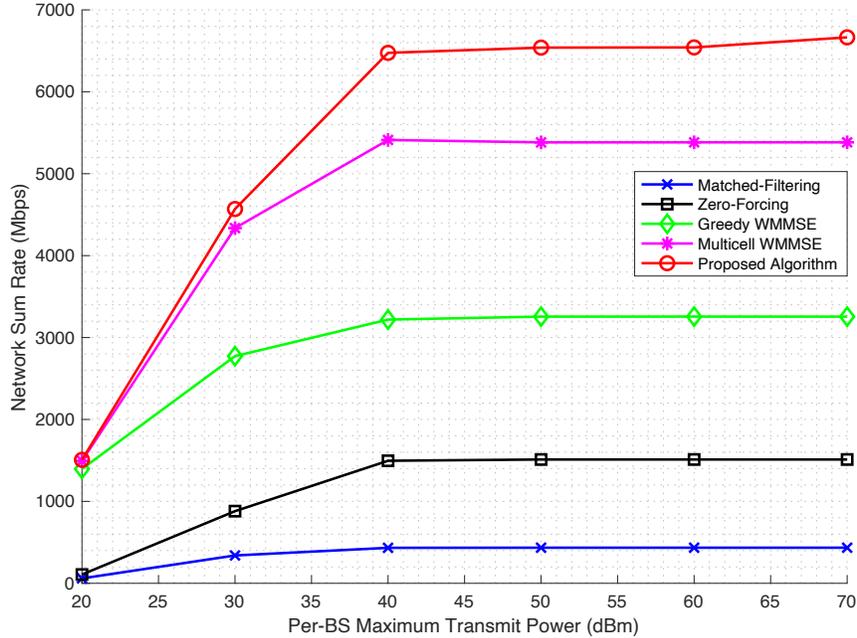}
                \centering
                \caption{Network sum-rate as a function of per-BS transmit power for different resource allocation schemes.}
                \centering
                \centering
                \label{sumrates}
        \end{center}
\end{figure}

As the figure clearly shows, the proposed algorithm substantially outperforms the competing WMMSE approaches. In particular, there is a gap of approximately 22\% in network sum-rate across transmit powers above 40dBm between the proposed approach and the benchmark multicell WMMSE algorithm. Both algorithms substantially outperform the greedy WMMSE algorithm, delivering sum-rates that are more than twice as high for large BS transmit powers. As expected, the performance the matched filtering and zero-forcing schemes lags behind those of the coordinated schemes.

This result in particular demonstrates the performance advantage our optimal scheduling approach based on the Hungarian algorithm delivers over greedy scheduling. One interesting phenomenon to note is that the network-sum rate of the proposed algorithm strictly increases as a function of BS transmit power; however, this is not always the case for the multicell WMMSE algorithm. This result highlights the tendency of the multicell WMMSE algorithm to get stuck in low-quality local optima. 

\emph{Optimization Across Frequency Bands:} Finally, we compare the performance of the joint power allocation approach versus the per-band power allocation method. In the former setting, we assign a total power of $FP_T$ to be distributed over the $F$ frequency bands at each BS. This means that the BS is free to use as much or as little power in each of the frequency bands, provided that the total power consumed across all frequency bands is less than $FP_T$. In contrast, with the per-band power allocation strategy, each BS can only utilize a maximum power of $P_T$ per individual frequency band. As we can observe in Fig. \ref{multifr}; there is no significant performance benefit in terms of either edge rates or overall utility to choosing the joint power allocation strategy over the per-band strategy. Furthermore, the number of iterations to calculate the beamforming weights is more than the decoupled setting, since the bisection search step is now being performed across all frequency bands.

\begin{figure}[!t] 
        \begin{center} 
                \setlength\belowcaptionskip{-3.7\baselineskip}
                \includegraphics[trim={0cm 5cm 0cm 5cm},clip, height=0.65\textwidth]{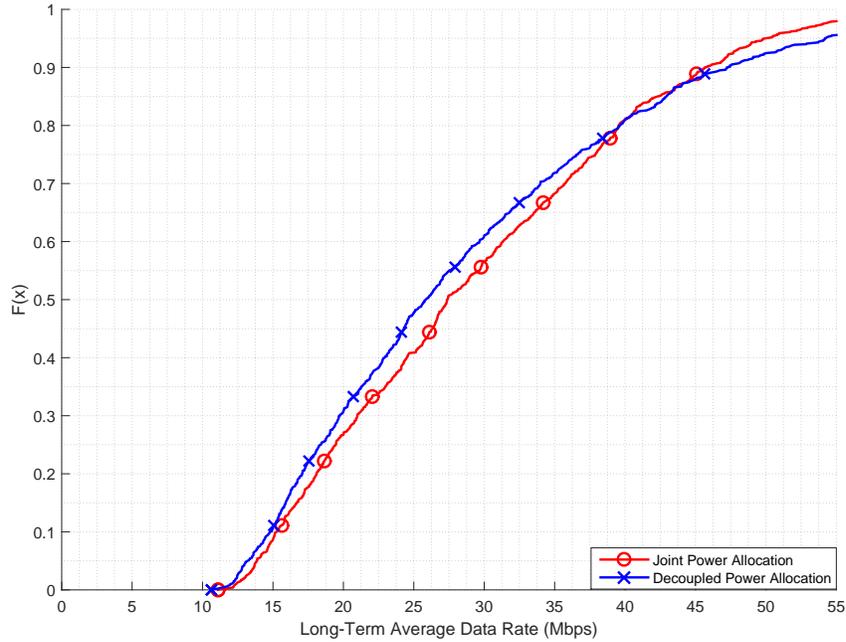}
                \centering
                \caption{CDFs of average data rates achieved with joint and decoupled power allocation schemes.}
                \centering
                \centering
                \label{multifr}
        \end{center}
\end{figure}

\subsection{Complexity Analysis}
In comparing the performance of these various resource allocation strategies, a critical point is the computational complexity involved. From~\cite{shi2011iteratively}, the computational complexity of the beamforming step in WMMSE can be derived as ${\mathcal{O}}{\mathrm{(}}{\mathit{\kappa}}^{2}{M}^{2}\mathrm{{+}}{\mathit{\kappa}}{M}^{3}{\mathrm{)}}$, where $\mathit{\kappa}$ represents the total number of users scheduled in the network. For our setting, we have  $\left|{\mathcal{B}}\right|$ cooperating cells. For simplicity of analysis, we assume that each cell has $K$ users associated with it; thus we have $\mathit{\kappa}\mathrm{{{}={}}}{K}\left|{\mathcal{B}}\right|$ for the multicell WMMSE scheme and $\mathit{\kappa}\mathrm{{{}={}}}{M}\left|{\mathcal{B}}\right|$ for the greedy WMMSE scheme and proposed.

Accordingly, the computational complexity of the WMMSE algorithm with greedy scheduling can be found as ${\mathcal{O}}{\mathrm{(}}{M}^{4}{\left|{\mathcal{B}}\right|}^{2}\mathrm{{+}}{M}^{2}{K}{\left|{\mathcal{B}}\right|}^{2}{\mathrm{)}}$ whereas the computational complexity of the multicell WMMSE algorithm is $
{\mathcal{O}}{\mathrm{(}}{M}^{3}{K}\left|{\mathcal{B}}\right|\mathrm{{+}}{M}^{2}{K}^{2}{\left|{\mathcal{B}}\right|}^{2}{\mathrm{)}}
$. The proposed algorithm has the same computational complexity as the greedy WMMSE scheme (since we schedule at most $M$ users in a single time slot). It follows that the computational complexity of the fractional programming strategy is at most given by ${\mathcal{O}}{\mathrm{(}}{M}^{4}{\left|{\mathcal{B}}\right|}^{2}\mathrm{{+}}{M}^{2}{K}{\left|{\mathcal{B}}\right|}^{2}{\mathrm{)}}$. A comparison of the per-iteration computational complexity of the various resource allocation schemes discussed is provided in Table \ref{complexity_analysis}.
\begin{table}[!t]
        \centering
        \caption{Per-iteration computational complexity of different resource allocation schemes.}
        \label{complexity_analysis}
        \begin{tabular}{|
                        >{\columncolor[HTML]{EFEFEF}}l |
                        >{\columncolor[HTML]{FFFFFF}}c |}
                \hline
                \multicolumn{1}{|c|}{\cellcolor[HTML]{EFEFEF}\textbf{\begin{tabular}[c]{@{}c@{}}Resource Allocation \\ Strategy\end{tabular}}} & \cellcolor[HTML]{EFEFEF}\textbf{Complexity Per Iteration} \\ \hline
                Matched Filtering\footnotemark & ${\mathcal{O}}{\mathrm{(}}{M}^{2}\left|{\mathcal{B}}\right|{\mathrm{)}}$ \\ \hline
                Zero-Forcing\footnotemark[\value{footnote}] &  ${\mathcal{O}}{\mathrm{(}}{M}^{4}\left|{\mathcal{B}}\right|{\mathrm{)}}$\\ \hline
                Greedy WMMSE & ${\mathcal{O}}{\mathrm{(}}{M}^{4}{\left|{\mathcal{B}}\right|}^{}\mathrm{{+}}{M}^{2}{K}{\left|{\mathcal{B}}\right|}^{2}{\mathrm{)}}$ \\ \hline
                Multicell WMMSE & $\mathcal{O}\mathrm{(}{M}^{3}{K}{\left|{\mathcal{B}}\right|}\mathrm{}\mathrm{{+}}{M}^{2}{K}^{2}{\left|{\mathcal{B}}\right|}^{2}{\mathrm{)}}$\\ \hline
                Proposed Algorithm & ${\mathcal{O}}{\mathrm{(}}{M}^{4}{\left|{\mathcal{B}}\right|}^{}\mathrm{{+}}{M}^{2}{K}{\left|{\mathcal{B}}\right|}^{2}{\mathrm{)}}$\\ \hline
        \end{tabular}
\end{table}
\footnotetext{Note that these uncoordinated schemes require only a single iteration to determine the network resource allocation strategy.}

To understand these results, we revisit some of the assumptions made earlier in the system model. Since we deal with a large-scale MIMO system, the number of users in each cell ($K$) is assumed to be significantly larger than the number of antennas at the BS. {Critically, as $K$ becomes asymptotically large, for multicell WMMSE, the ${M}^{2}{K}^{2}{\left|{\mathcal{B}}\right|}^{2}$ term will dominate the complexity expression. On the other hand, for the proposed algorithm, the only term dependent upon the number of users per cell is ${M}^{2}{K}{\left|{\mathcal{B}}\right|}^{2}$; thus, if $M$ is fixed and $K\gg{M}$ as per the assumption in our system model earlier, then the $M^{4}|\mathcal{B}|^{2}$ term becomes negligible in comparison. It follows that in this case, the complexity of the multicell WMMSE algorithm will be roughly $K/M$ times higher than the proposed algorithm. Our algorithm achieves this noteworthy reduction in complexity while exceeding the performance of the multicell WMMSE algorithm.

It is worth noting that even with large computation resources in a CRAN, a fully centralized globally optimal solution is infeasible since the problem at hand is $\textrm{NP}$-hard. Furthermore, even in this case, the reduced computational complexity of our algorithm, compared to say the multicell WMMSE approach, is important for implementation with a large number of users and BS antennas. We emphasize that, furthermore, this gain in computational complexity is accompanied by improved performance. Compared to the generic solvers we have considered, our proposed algorithm is more convenient from an implementation perspective since the updates for the optimization variables are expressed in closed-form. This is also in contrast to globally optimal strategies such as outer polyblock approximation, in which the updates are not expressible in closed-form \cite{utschick2012monotonic}. We also note that the SQP and interior-point algorithms do not have closed-form updates; calculating Hessians for the latter approach in particular is computationally taxing for large network sizes \cite{nocedal2006numerical}.}

Prior to proceeding further, we consider the reason for this behavior in greater detail. Recall that in the multicell WMMSE scheme, we schedule all users simultaneously and let the beamformer design iterate. Thus, this is equivalent to considering the original optimization problem but with \textit{no scheduling constraints}. As the algorithm converges, most users are assigned beamformers with zero power; the number of users ultimately assigned nonzero power is very close to $M$. Nonetheless, the beamforming weights still have to be calculated for the users who will ultimately be dropped since they are not known \textit{a priori}. This requires computationally costly matrix inversions, thus leading to a higher overall complexity. With the proposed algorithm, since we schedule the best $M$ users in a single time slot, the number of matrix inversions required is identical to the greedy WMMSE strategy. 

{A second consideration is that the set of users scheduled in each iteration of the algorithm has the potential to change. Unlike the greedy WMMSE strategy, however, the proposed scheme ensures that the network WSR increases after each scheduling step as we find the \emph{best network-wide scheduling pattern} for the fixed set of beamforming weight vectors. Hence, we conclude that the proposed strategy of intelligently scheduling the smaller set of users (which is close to the number of users implicitly scheduled by the multicell WMMSE scheme) nets an improvement in terms of computational complexity while providing superior performance. Also, it is worth pointing out that scheduling all users, as the multicell WMMSE algorithm does, requires a much greater overhead in terms of communication between the coordinated BSs in the network.}

% Please add the following required packages to your document preamble:
% \usepackage[table,xcdraw]{xcolor}
% If you use beamer only pass "xcolor=table" option, i.e. \documentclass[xcolor=table]{beamer}
\begin{table}[]
        \centering
        \caption{{Average execution time of different resource allocation schemes.}}
        \label{execution_time}
        \begin{tabular}{|
                        >{\columncolor[HTML]{EFEFEF}}l |c|c|}
                \hline
                \textbf{Resource Allocation Scheme} &
                \multicolumn{1}{l|}{\cellcolor[HTML]{EFEFEF}\text{$M=2$, $K_b=5$}} &
                \multicolumn{1}{l|}{\cellcolor[HTML]{EFEFEF}\text{$M=8$, $K_b=80$}} \\ \hline
                Matched Filtering  & 0.01 & 1.1  \\ \hline
                Zero-Forcing       & 0.05 & 1.2  \\ \hline
                Greedy WMMSE       & 1.3  & 27.2 \\ \hline
                Multicell WMMSE    & 1.3  & 48.8 \\ \hline
                Proposed Algorithm & 1.3  & 19.5 \\ \hline
                Interior-Point     & 40.6 & N/A  \\ \hline
                SQP                & 53.6 & N/A  \\ \hline
        \end{tabular}
\end{table}
\footnotetext{These execution times are obtained using a desktop computer with a 3.6 GHz Intel{\textregistered} Core{\texttrademark} i7-4790 CPU and 24 GB of RAM.}

Finally, we consider the actual execution time of the various resource allocation schemes. Although the complexity analysis provides a formal characterization of how the running time of each resource allocation scheme scales as the network parameters change, it is nonetheless useful for us to compare the average execution time of each scheme. To ensure a fair comparison, we measure the time taken from initialization until the per-iteration increase in the network weighted sum rate is less than 10\% for the given time slot. For the simulation parameters detailed in Table \ref{tab1}, the average execution times on the desktop computer used to generate these results are logged in Table \ref{execution_time}. As we can see, the uncoordinated schemes require a much lower execution time on average than the coordinated schemes; however, this comes at the expense of compromised performance as discussed earlier. Both the proposed algorithm and greedy WMMSE approach perform similarly in terms of average execution time. The multicell WMMSE scheme has the highest average execution time among coordinated schemes; as discussed earlier, this is due to the fact that all users in the network are scheduled simultaneously, so the number of matrix inversions needed is very large compared to both greedy WMMSE and the proposed algorithm.{ Finally, the interior-point and SQP algorithms have extremely long execution times for the $M=2$ and $K_b=5$ setting and do not converge within a reasonable time for the $M=8$ and $K_b=80$ setting.}

\section{Conclusions}
        In this paper, we developed a coordinated resource allocation scheme for the downlink of multiuser MIMO networks with multiple orthogonal frequency bands. The proposed scheme outperforms uncoordinated schemes like zero-forcing and matched filtering, as well as the coordinated greedy and state-of-the-art multicell WMMSE schemes in terms of the average sum-log-utility function and network sum-rate. Furthermore, the proposed scheme offers significant computational complexity savings over the state-of-the-art multicell WMMSE scheme and also has a much lower average execution time. By intelligently scheduling the best subset of $M$ users for a fixed given set of beamforming weights, the proposed approach is able to reduce the computational complexity as well as providing a higher weighted sum-rate in a single time slot and higher long-term average sum-log-utility. Thus, we conclude that the proposed approach offers an effective high-performance and low-complexity solution to the nonconvex NP-hard weighted sum-rate maximization problem.
        {
\begin{spacing}{1.0}
\bibliographystyle{IEEEtran}
        %\bibliography{strings,refs}
\bibliography{IEEEabrv,biblio}  
\end{spacing}}
\end{document}